\newtheorem{theorem}{Theorem}
\newtheorem{lemma}[theorem]{Lemma}
\theoremstyle{definition}
\newtheorem{definition}{Definition}[section]
\newcommand{\braid}{\text{\footnotesize $\mathrm{BRAID}_4$}}
\newcommand{\braidtwo}{\text{\footnotesize $\mathrm{BRAID}_2$}}
\newcommand{\Ptot}{P_{\mathrm{tot}}}
\newcommand{\wt}{\mathrm{wt}}
\newcommand{\Mod}[1]{\ \mathrm{mod}\ #1}
\newcommand{\CNOT}{\text{\footnotesize $\mathrm{CNOT}$}}
\newcommand{\CZ}{\text{\footnotesize $\mathrm{CZ}$}}
\DeclareMathOperator{\Ima}{Im}
\begin{abstract}
Majorana-based quantum computation in nanowires and neutral atoms has gained prominence as a promising platform to encode qubits and protect them against noise. In order to run computations reliably on such devices, a fully fault-tolerant scheme is needed for state preparation, gates, and measurements. However, current fault-tolerant schemes have either been limited to specific code families or have not been developed fully.  In this work, we develop a general framework for fault-tolerant computation with logical degrees encoded into Majorana hardware. We emphasize the division between even and odd Majorana codes and how it manifests when constructing fault tolerant gadgets for these families. We provide transversal constructions and supplement them with measurements to obtain several examples of fault tolerant Clifford gadgets. For the case of odd codes, we give a novel construction for gadgets using quantum reference frames, that allows to implement operations that are forbidden due to parity superselection. We also provide a fault-tolerant measurement scheme for Majorana codes inspired by Steane error correction, enabling state preparation, measurement of logical operations and error correction. We also point out a construction for odd Majorana codes with transversal T gates. Finally, we construct a high rate quantum LDPC Majorana code with logical qubits. Our work shows that all necessary elements of fault-tolerant quantum computation can be consistently implemented in fermionic hardware such as Majorana nanowires and fermionic neutral atoms.
\end{abstract}
\begin{document}

\title{Fault tolerant Operations in Majorana-based Quantum Codes: Gates, Measurements and High Rate Constructions}
\author{Maryam Mudassar$^*$}
\affiliation{Joint Center for Quantum Information and Computer Science, University of Maryland and NIST, College Park, MD, USA}

\def\thefootnote{$^*$}\footnotetext{\href{maryammu@umd.edu}{maryammu@umd.edu}}
\author{Alexander Schuckert}
\affiliation{Joint Center for Quantum Information and Computer Science, University of Maryland and NIST, College Park, MD, USA}
\affiliation{Joint Quantum Institute, University of Maryland and NIST, College Park, MD, USA}
\author{Daniel Gottesman}
\affiliation{Joint Center for Quantum Information and Computer Science, University of Maryland and NIST, College Park, MD, USA}

\maketitle
\section{Introduction}
Quantum computation is inherently susceptible to errors, and significant effort has gone into developing fault-tolerant schemes, all of which fundamentally rely on quantum error correction. However, a persistent challenge is that the physical error rates of many existing platforms remain above the fault-tolerance threshold. While ongoing advances in hardware — including improvements in coherence times and gate fidelities — aim to address this, an alternative and complementary approach is topological quantum computing, which seeks to build hardware with intrinsic protection against local errors.
Majorana zero modes (MZM) are believed to form at superconductor-semiconductor interface, and while progress in the discovery of MZM's has been challenging, there have been some promising efforts \cite{Aghaee2025, Liu2024} and many clear proposals on their implementation \cite{Alicea2012,aasen2025roadmapfaulttolerantquantum}. These modes are robust to local decoherence, hence one can encode logical information into these modes \cite{Kitaev_2001}. Majorana zero modes also correspond to Ising anyons \cite{Pachos_2012}, and one can realize logical gates on the encoded data by topologically protected physical operations. However, Ising anyons are not universal and in order to do universal computing, other topologically unprotected schemes are required \cite{Nayak_Measurememt}. Moreover, existing measurement based schemes still suffer from other noise processes \cite{alase2025decoherencemajoranaqubits1f} and require active error correction \cite{aasen2025roadmapfaulttolerantquantum}. 

The necessity to perform active error correction on these Majorana modes opens up the field of Majorana error correction, or Majorana codes. 
Majorana codes are stabilizer codes that encode logical qubits or logical fermionic degrees of freedom, with physical Majorana degrees of freedom. Since their introduction \cite{bravyi_majorana_2010}, there have been families of such codes, such as small codes \cite{hastings2017smallmajoranafermioncodes}, Majorana color codes \cite{Litinski_Majorana, McLauchlan2024newtwistmajorana, LiColorCode} and Majorana surface codes \cite{Tran_optimizing,PaetznickFloquetCodes,LiThreshold} being the important ones, where some codes have been optimized because proposed implementations allow local low weight measurements \cite{PaetznickFloquetCodes, Chao_2020}, or to protect against quasiparticle errors \cite{Kundu:2024auq}. The structure of the Majorana Clifford group has also been explored recently \cite{bettaque2024structuremajoranacliffordgroup, MudassarEncoding, McLauchlan_2022} which allows one to perform Clifford operations on such codes, enabling code initialization and logical gates.  

In the context of qubits based on Majorana nanowires, previous work \cite{aasen2025roadmapfaulttolerantquantum} has focused on encoding qubits into small superconducting islands composed of two or three nanowires, and then concatenating these small Majorana codes with known qubit codes such as surface codes or Floquet codes. While these designs are valuable since they can use existing techniques of fault-tolerant operations on qubit codes, these leave the benefits general Majorana codes afford on the table as well. Moreover, generally qubit-based codes are only tailored to Pauli noise models and would not be able to protect against native fermionic noise. In nanowire-based systems, quasiparticle poisoning can also cause gates to fail and measurements to become faulty, so the reliance on topological protection to suppress errors, especially during measurements becomes precarious. Hence, creating a general toolkit of fault-tolerant gadgets for these devices that does not rely on physical means of protection is crucial. 

Previous studies have developed fault-tolerant gate implementations for Majorana color codes that have utilized code deformations and lattice surgery techniques \cite{McLauchlan2024newtwistmajorana, LitinskiUniversal, LiColorCode} supplemented with magic state distillation \cite{BrienFermionMsd}. However, lattice surgery techniques require $d$ rounds of syndrome extraction, hence exploring other means of performing fault tolerant gates which includes transversal implementations and measurements is also extremely valuable. 

There have also been recent breakthroughs in manipulating fermionic neutral atom arrays in 2D \cite{yan_two-dimensional_2022,gonzalez-cuadra_fermionic_2023,bojovic2025}, and some proposals of encoding logical degrees of freedom into an error corrected subspace of these programmable arrays \cite{schuckert2024fermionqubitfaulttolerantquantumcomputing,ott2024errorcorrectedfermionicquantumprocessors}. In the absence of number conservation, these codes have been shown to be Majorana codes, further motivating the study of performing fault-tolerant operations on these devices. 

This work aims to fill some of the existing gaps in fault tolerance of Majorana codes. Section \ref{section: background} gives background on Majorana codes and known mappings to qubit codes, fermionic Clifford operations and implementation of fermionic operations in terms of measurements. Section \ref{section:errormodel} describes the error model for our codes, which is motivated from physical error sources in both neutral atom and nanowire architectures. We do not assume a specific code family; rather we divide them into two broad classes: even Majorana codes and odd Majorana codes in Section \ref{section: evenandodd}, and explore their structure and limitations. In Section \ref{section:reference} we explain the idea of reference systems that we use to carry out Clifford and non Clifford logical gates on odd Majorana codes throughout the paper. In Section \ref{section:transversalgates},  we discuss transversal Clifford operations; and supplement them with ancilla measurements to realize a fault-tolerant set of Clifford gates in Section \ref{section:nontransversal}. For fault tolerant error correction, we describe a Steane-inspired measurement gadget for Majorana codes in Section \ref{section:Steane}. In Section \ref{section:nonClifford} we describe a code that has transversal $T$ gate that can be useful in magic state distillation in a platform where one only has access to fermion-fermion Clifford operations.  Finally, we discuss Majorana CSS codes and Majorana qLDPC codes in Sections \ref{section: CSScodes} and \ref{section: qldpc code} respectively. We end with a discussion and possible future directions.

\section{Background}\label{section: background}
\subsection{Majorana fermions}
Majorana codes are designed for architectures with physical fermionic degrees of freedom. They are defined in terms of $2n$ Majorana operators:
\begin{align}
    \gamma_i =  a_i + a_i^{\dagger}\\
    \bar{\gamma}_i = -i(a_i - a_i^\dagger)
\end{align}
where $a_i,a_i^{\dagger}$ are the fermionic creation and annihilation operators fulfilling $\lbrace a_i,a_i^\dagger \rbrace=1$. Majoranas have anti-commutation relations
\begin{align}
\{\gamma_i,\gamma_j\} &= 2\delta_{i,j},\qquad \gamma_i^2 = I,\\
\{\gamma_i,\bar 
{\gamma}_j\} &= 0.
\end{align}
One can define a group $\text{Maj}(2n)$ for $2n$ Majorana modes, analogous to the Pauli group $P_n$, where the group is generated by Majorana monomials:
\begin{align}
    \text{Maj}(2n) 
    &= i^{r} \left\langle 
    \gamma_1^{c_0} \gamma_2^{c_1} \cdots \gamma_n^{c_{n-1}}\, 
    \bar{\gamma}_1^{c_n} \bar{\gamma}_2^{c_{n+1}} \cdots \bar{\gamma}_n^{c_{2n-1}}
    \right\rangle, \\
    &\text{with } r \in \left\{ 0,1,2,3 \right\},\ 
    c_i \in \left\{ 0,1 \right\}.
\end{align}
Similar to the Pauli group, the order of this group is $2^{2n+1}$. 
In MZM based architectures, each physical Majorana mode can be individually manipulated using gates, or Majorana bilinears can be measured. Hence logical states, commonly qubit states can be encoded into these modes and controlled. 

While Majorana zero modes and their manipulation was the original motivation for this paper, it is interesting to note that these codes also bear value for fermionic neutral atoms. 
In neutral atoms, fermionic sites are realized by sites of an optical lattice or a grid of optical tweezers~\cite{gonzalez-cuadra_fermionic_2023, schuckert2024fermionqubitfaulttolerantquantumcomputing}. Each of these sites can either be occupied by a fermionic atom (encoding the state $\ket{1}$) or not (encoding the state $\ket{0}$). Destroying or creating a fermion is then denoted with the complex fermionic operators $a_i$ and $a_i^\dagger$, respectively, and Majorana fermions are encoded into the real and imaginary parts of these complex fermionic operators. Similarly, electrons in dots and donors in silicon realize complex fermions and can be braided using tunneling~\cite{wang2022a} and pairing induced by the proximity effect with a superconductor. Moreover, qubit-fermion gates with hyperfine qubits in these platforms yield further flexibility~\cite{rad2024,schuckert2024fermionqubitfaulttolerantquantumcomputing}.

It is worth pointing out the normal physical setting for these systems corresponds to a particle number conservation constraint. More precisely, this means that physical Hamiltonian $H_{\text{f}}$ commutes with the total fermion number $N_{\text{tot}}$, which is a \textit{symmetry}, and hence $H_{\text{f}}$ can be written in a block diagonal form, and similarly, the Hilbert space can also be written in this block diagonal basis
\begin{align}
    \mathcal{H}_{f}=\bigoplus_{i=0}^{N} \mathcal{H}_i
\end{align}
Moreover, the native unitaries are also considered to be \textit{number conserving} operations. However, this is not an essential constraint, and people have shown proposals where the number conserving constraint can be lifted, either using reference systems \cite{ott2024errorcorrectedfermionicquantumprocessors} or by engineering specific \textbf{pairing} gates \cite{schuckert2024fermionqubitfaulttolerantquantumcomputing}. Doing so changes the symmetry  from $S=N_{\text{tot}}$ to $S=P_{\text{tot}}$, so $H_{\text{f}}$ is instead block diagonal in the total parity basis, and the Hilbert space $\mathcal{H}_f$ also becomes a $Z_2$ graded space. For our purpose, this means that we can use Majorana modes to represent physical degrees of freedom instead of fermions, and apply parity-preserving unitaries to our system.
\subsection{Error model for Majorana nanowires and neutral atoms}\label{section:errormodel}
Majorana codes are a natural fit for fermionic hardware as well as architectures based on MZM's. This is because they protect logical information against the native fermionic noise model. 
In \cite{Knapp2018modelingnoiseerror}, the authors describe a stochastic Majorana noise model. Analogous to Pauli noise models, a Majorana noise process can be modelled at different levels of complexity: from phenomenological noise to circuit level noise. In \cite{Knapp2018modelingnoiseerror}, the authors divide their Majorana architectures into $n/m$ islands where each island hosts $2m$ MZM's, and then consider noise processes that correspond to single mode errors, correlated errors and measurement errors. For general Majorana codes, modes are not always organized in terms of islands or qubits. We can describe a typical Majorana noise model using the following processes
\begin{itemize}
    \item \textit{Quasiparticle poisoning}: A single $\gamma_i$ error can act on mode $i$ and flip the state. This can occur with interaction from an external fermionic bath, or coupling between multiple modes or codeblocks. When Majoranas are organized in terms of islands \cite{Knapp2018modelingnoiseerror} or in plaquettes \cite{Litinski_Majorana}, these processes may be reduced using Coulomb blockades. Importantly, it has been suggested  in \cite{Knapp2018modelingnoiseerror} that in the absence of quasiparticle poisoning, the Majorana noise model reduces to a Pauli noise model. 
    \item \textit{Dephasing noise} Two Majoranas $\gamma_i,\bar{\gamma}_i$ corresponding to the same fermion mode may experience an error at the same time. In \cite{Knapp2018modelingnoiseerror}, this would correspond to a local error in the same island. A two-mode error in a single island or codeblock is mathematically equivalent to a single Pauli error, and can correspond to either $X$ or $Z$ depending on the basis choice.  
    \item \textit{Correlated/circuit level noise} If a $k$ mode gate occurs, it can spread a single error to at most $k$ modes, and this corresponds to a correlated error. In \textit{transversal} schemes such as ours, each $k-mode$ operation is broadcast over $k$ blocks, so correlated errors are curbed, while in other schemes such as \cite{aasen2025roadmapfaulttolerantquantum}, measurements are chosen so they do not spread errors.
\end{itemize}
Acting on the state, for a \textit{stochastic Majorana model}, time is discretized into finite steps $\tau$, and at each time step, the state undergoes:
\begin{align}
    \ket{\psi}\rightarrow \gamma_1^{a_1}\gamma_2^{a_2}\ldots \bar{\gamma}_{n}^{a_{2n}} \ket{\psi}
\end{align}
There may also be measurements, either mid-circuit or at the end, and these correspond to the measurement noise vector $\textbf{b}=(b_1,b_2\ldots b_n)$ added to the measurement vector $\textbf{m}=(m_1,m_2\ldots m_n)$. To correct for measurement noise, one can use classical decoding. We demonstrate this for fault-tolerant error correction in Section \ref{section:Steane}.

The error model of the neutral-atom-encoded Majorana fermions is the same as the one for MZMs~\cite{gonzalez-cuadra_fermionic_2023,schuckert2024fermionqubitfaulttolerantquantumcomputing, ott2024errorcorrectedfermionicquantumprocessors}: quasiparticle poisoning happens when an atom is lost, which realizes a noise operator $a_i$ which can be expanded in terms of $\gamma_i$ and $\bar \gamma_i$. Dephasing noise is induced by laser fluctuations of the optical lattice and optical tweezers and those errors may be correlated.

\subsection{Majorana codes}

A Majorana stabilizer code is defined by specifying the stabilizer group as the Abelian subgroup of $S_{\text{Maj}} \subseteq \text{Maj}(2n)$, s.t. $-I\notin S_{\text{Maj}}$. A Majorana stabilizer code on $2n$ modes corresponds to $n$ physical fermions, has $n-k$ stabilizer generators, and the code distance $d$ is the weight of the smallest logical operator. As a result, a Majorana code can also be written as $[[n,k,d]]_f$. 
Sometimes, an additional parameter called $l_\mathrm{even}$ is also defined since Majorana codes do not treat even and odd weight logical operators on the same footing, but in a fault tolerance context, this is not a good metric, see more discussion in Section \ref{section:transversalgates}. The logical operators for such codes may have even or odd weight, depending on whether $\Ptot\in S$ or not, see more discussion in Section \ref{section: evenandodd}.

Similar to the Pauli group, the Majorana group can be converted into a symplectic vector space with the symplectic form $\Lambda_f$ \cite{chien2022optimizingfermionicencodingshamiltonian}, 
\begin{align}
\Lambda_f = I + C 
\end{align}
where $I$ is the identity matrix and $C$ is a constant $2n\times2n$ matrix of ones. One can then find whether Majorana operators commute by finding their symplectic inner product defined as 
\begin{align}
    \textbf{x} \odot \textbf{z} = \textbf{x} \Lambda_f \textbf{z}^T \Mod{2}
\end{align}

where \textbf{x} and $\textbf{z}$ correspond to length $2n$ Majorana binary strings. 
In general, the commutation relation for two Majorana binary strings \textbf{x,z} are:
\begin{align}
 \text{If }|\textbf{x}||\textbf{z}|+\textbf{x}.\textbf{z}  \mod 2 =0, \text{they commute}\\
 \text{If }|\textbf{x}||\textbf{z}|+\textbf{x}.\textbf{z}  \mod 2 =1, \text{they anticommute}
\end{align}
In this paper, we make use of both odd and even Majorana strings, and the commutation rules for these are summarized:
\begin{itemize}
    \item If two Majorana strings have even weight, they commute if they overlap on an even number of sites, and anti-commute otherwise.
    \item If one Majorana string is even and the other is odd, they commute if they overlap on an even number of sites and anticommute otherwise. 
    \item If two Majorana strings have odd weight, they commute if they overlap on an odd number of sites, and anticommute otherwise. 
\end{itemize}

The \textit{parity superselection rule} is also an important constraint to take into account when considering Majorana codes. It states that one cannot create a state which admits a coherent superposition of an even and odd number of fermions. This means that all physical operations with respect to some \textit{frame of reference} must preserve the parity of the system. More formally, any quantum state $\rho$ can be decomposed as \cite{VidalSuperselection}
\begin{align}
    \rho = \rho_{e}\oplus \rho_{o}
\end{align}
such that $\Ptot\rho_{e}\Ptot^{\dagger}=\rho_{e}$, and $\Ptot\rho_{o}\Ptot^{\dagger}=\rho_{o}$. 
Other observables and unitaries must also must admit a direct sum structure with respect to the fermionic Hilbert space, which entails that they must commute with the total parity (see Appendix \ref{section: parityconstraintunitary} for proof). In particular, this constrains the stabilizer generators: they must commute with the total fermion parity i.e. $\Ptot = i^n \prod_{i=1}^n \gamma_i \bar{\gamma}_{i} $ since they correspond to physical, measurable operators. However, it is important to recognize that the superselection rule is defined with respect to a reference, and one can use \textit{reference systems}~\cite{BartlettSuperselection} to relax this constraint.

In general, logical operators may or may not commute with $\Ptot$. If $\Ptot\in S$, logical operators have to commute with $\Ptot$ and must therefore be even weight in the physical Majorana operators. Such codes are known as \emph{even} codes.  Codes that do not contain $\Ptot\in S$ have at least one logical operator which is odd weight~\cite{bravyi_majorana_2010}. Such codes are called \emph{odd} codes. As an example for an \emph{odd} code, consider the 1D Kitaev chain on $2n$ Majoranas. One can encode one logical fermion into the Majorana zero modes as follows:
\begin{align}
    \Gamma_1 = \gamma_1 \quad
    \Bar{\Gamma}_1 = \bar{\gamma}_1,
\end{align}
where we denoted the logical Majorana operators with $\Gamma_1$  and $ \Bar{\Gamma}_1$.
Here the code parameters are $[[n,1,1]]_f$. Note that while the distance is one, the Kitaev chain can still detect and correct all local errors that occur in the bulk of the chain.

One can also combine two Kitaev chains to obtain an \emph{even} code, also known as a Majorana tetron code~\cite{Litinski_Majorana}. Because even-weight operators commute between different codeblocks, this enables encoding a  logical \emph{qubit} by $\bar{X}=i\gamma_1\gamma_2$ and $\bar{Z}=i\gamma_1\bar{\gamma_1}$, and introducing an additional parity constraint $S=\gamma_1\bar{\gamma_1}\gamma_2\bar{\gamma_2}$. We give a longer discussion of even and odd codes in Section \ref{section: evenandodd}.

Majorana codes can be mapped to qubit CSS codes \cite{bravyi_majorana_2010} and vice versa, and these mappings are often a useful tool when finding Majorana codes. Moreover, classical weakly self-dual codes can also be converted to Majorana codes \cite{vijay2017quantumerrorcorrectioncomplex}. We review these as follows.
\begin{lemma}
(Vijay et al.~\cite{vijay2017quantumerrorcorrectioncomplex}) Any classical weakly self-dual $[2n,k,d]$ code can be converted to a Majorana code $[[n,n-k,d^{\perp}]]$ where $d^{\perp}$ is the dual distance.
\label{lemma: lemma1}
\end{lemma}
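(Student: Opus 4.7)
The plan is to view each classical codeword $c\in C\subseteq\mathbb{F}_2^{2n}$ as the support of a Majorana monomial and translate the weak self-duality condition $C\subseteq C^\perp$ into the mutual commutativity required of a Majorana stabilizer group under the symplectic form $\Lambda_f = I+C$ introduced earlier in the paper.

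First I would define a map $M:\mathbb{F}_2^{2n}\to\mathrm{Maj}(2n)$ by
\[
M(c)=i^{|c|(|c|-1)/2}\,\gamma_1^{c_1}\cdots\gamma_n^{c_n}\,\bar{\gamma}_1^{c_{n+1}}\cdots\bar{\gamma}_n^{c_{2n}},
\]
with the $i$-prefactor being the standard Majorana-monomial normalization that makes $M(c)$ Hermitian and ensures $M(c)^2=I$. The binary commutation rule derived from $\Lambda_f$ then says that $M(c)$ and $M(c')$ commute iff $|c||c'|+c\cdot c'\equiv 0\pmod 2$, reducing the verification to purely combinatorial conditions on the classical codewords.

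Second, I would show that $S=\{M(c):c\in C\}$ is a valid stabilizer group. Weak self-duality forces $c\cdot c\equiv 0\pmod 2$ for every $c\in C$, so every codeword of $C$ has even Hamming weight; consequently $|c||c'|\equiv 0$, and $c\cdot c'=0$ holds throughout $C\times C$, so all elements of $S$ commute. Choosing an $\mathbb{F}_2$-basis of $C$ yields $k$ independent Hermitian generators with eigenvalues $\pm 1$, and the phase normalization ensures $-I\notin S$. Since $|c|$ is even, each $M(c)$ also commutes with $\Ptot$, so the stabilizer respects the parity superselection rule.

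Third, I would identify the centralizer and hence the logical algebra. A Majorana string $v$ commutes with every $M(c)$, $c\in C$, iff $|v||c|+v\cdot c\equiv 0\pmod 2$ for all $c\in C$. Because $|c|$ is always even this collapses to $v\cdot c\equiv 0$ for all $c\in C$, i.e.\ $v\in C^\perp$. Thus the centralizer is generated by $\{M(v):v\in C^\perp\}$, and the logical Majorana group is the quotient $C^\perp/C$, whose $\mathbb{F}_2$-dimension is $2n-2k$, corresponding to $n-k$ logical fermionic modes. The distance equals the minimum weight of a representative of a non-trivial coset, bounded below by $d^\perp$, the minimum weight of a nonzero codeword of $C^\perp$.

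The main technical obstacle is the phase bookkeeping needed to guarantee $-I\notin S$: one must check that the signs picked up when multiplying Hermitian Majorana monomials never conspire to yield $-I$ on a nontrivial product of generators. This follows from $\mathbb{F}_2$-linear independence of the chosen basis of $C$ together with the linear independence of distinct Majorana monomials as operators, but it is the one calculation that needs to be written out with care rather than invoked combinatorially.
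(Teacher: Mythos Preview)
Your proposal is correct and follows essentially the same route as the paper's short proof: promote the rows of the classical generator matrix $G$ to Majorana stabilizer generators, use $GG^T=0$ (weak self-duality) to obtain commutativity, read off $n-k$ logical degrees from $\dim(C^\perp/C)=2n-2k$, and identify the distance with the minimum weight in $C^\perp\setminus C$, which the paper also notes is $\geq d^\perp$ in the degenerate case. Your version is more explicit about the Hermitian phase normalization and the verification that $-I\notin S$, which the paper's proof elides, but the underlying argument is the same.
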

\textit{Short proof:} For a classical code to be weakly self-dual, $C\subseteq C^{\perp}$, implying $GG^{T}=0$, where $G$ is the generator matrix of the classical code. For the Majorana code, the generator matrix $G$ can be converted to the binary matrix of the stabilizer for the Majorana code.  The number of physical fermions are half the number of classical bits, the rows of generator matrix become the rows of the stabilizer, hence the number of logical degrees of freedom are $n-k$. The distance of the code is the weight of smallest element in $N(S)/S$, which is the weight of smallest element in $H\setminus G$. If the code is non-degenerate, as in the case considered in \cite{vijay2017quantumerrorcorrectioncomplex}, then $d^{\perp}$ is the code distance, else $d\geq d^{\perp}$. 

\begin{lemma}
(Bravyi et al.~\cite{bravyi_majorana_2010}) Any Majorana code $[[n,k,d]]_f$ can be mapped to a qubit weakly self-dual  CSS code $[[2n,2k,d]]$.
\end{lemma}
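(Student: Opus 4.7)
The plan is to construct the CSS code directly from the binary symplectic representation of the Majorana code and check the parameters match. First I would note that parity superselection forces every stabilizer generator to have even weight: stabilizers are physically measurable, so they must commute with $\Ptot$, whose binary representation is the all-ones vector of weight $2n$. Applying the Majorana commutation rule to a weight-$w$ stabilizer $g$ and $\Ptot$ gives $w\cdot 2n + w \equiv w \bmod 2$, forcing $w$ to be even. This is the key structural fact needed for the whole construction.

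Next, let $G$ be the $(n-k)\times 2n$ binary matrix whose rows are the Majorana stabilizer generators. Because any two stabilizers commute and are both even weight, the Majorana commutation condition $|g_i||g_j| + g_i\cdot g_j \equiv 0 \bmod 2$ collapses to $g_i\cdot g_j \equiv 0 \bmod 2$, so $GG^T = 0 \bmod 2$, i.e., the associated classical code is weakly self-dual. I would then define a qubit CSS code on $2n$ qubits (one per physical Majorana) by taking both the $X$-type and $Z$-type parity check matrices to be $G$. The weakly self-dual condition is exactly what is needed for the $X$- and $Z$-stabilizers to mutually commute, so this is a valid CSS code with $2(n-k)$ stabilizer generators and hence $2n - 2(n-k) = 2k$ logical qubits.

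For the distance, I would observe that the logical operators of the CSS code are generated by $X^v$ and $Z^v$ with $v\in G^\perp\setminus G$, so the qubit code distance equals the minimum weight of a vector in $G^\perp\setminus G$. On the Majorana side, a binary vector $v$ represents an element of $N(S)$ iff it commutes with every stabilizer; since stabilizers are even weight, the parity term $|v||g|$ vanishes mod $2$ regardless of whether $v$ is even or odd, and the commutation condition again reduces to $v\cdot g \equiv 0$, i.e., $v\in G^\perp$. A Majorana logical is then any such $v$ not in $\mathrm{rowspan}(G)=G$, so the Majorana distance is likewise the minimum weight in $G^\perp\setminus G$, matching $d$.

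The main obstacle—really more of a subtlety—will be treating the odd codes uniformly with the even codes in the distance argument, since in odd codes logical operators may be odd weight and one might worry that the $|v||g|$ term contaminates the commutation condition. The resolution, as above, is that $|g|$ is \emph{always} even by parity superselection, so the mixed-parity cases of the commutation table never arise when checking a logical against a stabilizer. Once that is spelled out, the three parameters $[[2n,2k,d]]$ of the CSS code and its weakly self-dual structure follow immediately from the Majorana data.
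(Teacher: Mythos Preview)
Your proposal is correct and follows essentially the same route as the paper: set $H_x=H_z=G$ where $G$ is the Majorana stabilizer matrix, use $GG^T=0$ to get a valid weakly self-dual CSS code, and identify the logical operators on both sides with $G^\perp\setminus G$ to match the distance. The paper's proof is much terser and simply asserts $HH^T=0$ and that ``vectors in $C^\perp\setminus C$ correspond to logical operators in both codes''; you have usefully unpacked \emph{why} $GG^T=0$ holds (even-weight stabilizers from parity superselection reduce the Majorana commutation relation to ordinary orthogonality) and explicitly handled the odd-code subtlety that the paper leaves implicit.
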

\textit{Short proof:} 
Set $H_x=S$, $H_z=S$ for the quantum CSS code, where $S$ is the stabilizer of the Majorana code. $H_xH_z^T=0$ since $H H^T=0$. The new qubit code has support on $2n$ qubits, and has $2n-2k$ generators. Distance remains the same since any vectors in $C^{\perp}\setminus C$ correspond to logical operators in both the Majorana code and the qubit CSS code. 

Note that Lemma 1 and Lemma 2 can be combined to obtain a weakly self dual CSS code from a classical code, as outlined in \cite{Steane96}. Majorana codes can also be obtained from qubit CSS codes.

\begin{lemma}(Bravyi et al.~\cite{bravyi_majorana_2010})
Any qubit weakly self-dual CSS code $[[n,k,d]]$ can be converted to a Majorana code $[[n,k,d]]_f$.     \label{lem_csstomaj}
\end{lemma}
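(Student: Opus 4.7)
The plan is to mirror the construction of Lemma 2 in reverse. Given the weakly self-dual CSS code with $H_x = H_z = H$ (so $HH^T = 0$) on $n$ qubits, I would associate one fermionic mode with Majoranas $(\gamma_i,\bar\gamma_i)$ to each of the $n$ qubits, and for every row $\mathbf{s}$ of $H$ declare two Majorana stabilizer generators, $\prod_{i:\,s_i=1}\gamma_i$ and $\prod_{i:\,s_i=1}\bar\gamma_i$, each decorated with a phase $i^{\phi(\mathbf{s})}$ chosen so that the generator is Hermitian and squares to $+I$.

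The first step is to verify that these form a valid Majorana stabilizer group. Weak self-duality forces each row of $H$ to have even weight, so two $\gamma$-type (resp.\ $\bar\gamma$-type) generators built from rows $\mathbf{s}_1,\mathbf{s}_2$ are even-weight strings whose overlap $|\mathbf{s}_1\cap\mathbf{s}_2|=(HH^T)_{12}=0\bmod 2$, hence commute by the rule reviewed in Section \ref{section: background}. A mixed $\gamma/\bar\gamma$ pair picks up one anticommutation for each index in $\mathbf{s}_1\cap\mathbf{s}_2$, again zero mod $2$. Linear independence of the rows of $H$ lifts to independence of the $2\,\mathrm{rank}(H)$ Majorana generators and rules out $-I\in S_{\mathrm{Maj}}$. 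Since $2\,\mathrm{rank}(H)=n-k$, the Majorana code on $n$ fermions has $n-k$ independent stabilizers and therefore encodes $k$ logical degrees of freedom.

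The remaining and most delicate step is the distance. I would represent a Majorana operator as a pair $(\mathbf{a},\mathbf{b})\in\mathbb{F}_2^n\times\mathbb{F}_2^n$ of weight $|\mathbf{a}|+|\mathbf{b}|$ and evaluate its commutation with a stabilizer $(\mathbf{s},\mathbf{0})$ using the rule $|x||z|+x\cdot z\bmod 2$. Because $|\mathbf{s}|$ is even the weight-parity term vanishes, so the condition collapses to $\mathbf{s}\cdot\mathbf{a}=0$ independently of the parities of $|\mathbf{a}|,|\mathbf{b}|$; symmetrically, commutation with $(\mathbf{0},\mathbf{s})$ reduces to $\mathbf{s}\cdot\mathbf{b}=0$. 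Hence the Majorana normalizer is exactly the set of pairs with $\mathbf{a},\mathbf{b}\in C^\perp$, where $C=\mathrm{rowspan}(H)$, and such a pair is a stabilizer iff both coordinates lie in $C$. Any nontrivial logical therefore has at least one of $\mathbf{a},\mathbf{b}$ in $C^\perp\setminus C$, of weight $\geq d$, and the Majorana image of a minimum-weight qubit logical saturates this bound, so the distance is exactly $d$. The main obstacle is precisely this step: Majorana operators can freely mix $\gamma$ and $\bar\gamma$ factors and, when $\Ptot\notin S_{\mathrm{Maj}}$, can even be odd weight, so a priori a short "mixed" logical with no qubit analogue could appear; the evenness of every stabilizer weight is what decouples the two coordinates and locks the distance to the qubit distance $d$.
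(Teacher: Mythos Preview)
Your construction is exactly the paper's: set $H_\gamma=H$, $H_{\bar\gamma}=H$, and the paper's short proof simply asserts that ``all the parameters are identical since all we have done is just map $X$ to $\gamma$ and $Z$ to $\bar\gamma$,'' whereas you actually supply the commutation check and the distance argument it omits. One small slip to fix: in the mixed $\gamma/\bar\gamma$ commutation step, the strings $(\mathbf{s}_1,\mathbf{0})$ and $(\mathbf{0},\mathbf{s}_2)$ share \emph{no} Majorana modes as $2n$-bit vectors, so the test $|x||z|+x\cdot z$ reduces to $|\mathbf{s}_1|\,|\mathbf{s}_2|\bmod 2$, not $|\mathbf{s}_1\cap\mathbf{s}_2|$; since every row of $H$ has even weight this is still zero and your conclusion is unaffected.
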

\textit{Short proof:} Weakly self-dual for a CSS code implies $H_x=H_z=H$. Set $H_\gamma= H$ and $H_{\bar{\gamma}} = H$. This is a weakly self-dual Majorana CSS code. All the parameters are identical since all we have done is just map $X$ to $\gamma$ and $Z$ to $\bar{\gamma}$.  

Throughout the paper, we will be using these Lemmas extensively. 

\subsection{Majorana Clifford gates}
Majorana string operators are defined as:
\begin{align}
    M= i^r \gamma_1^{c_1}\ldots \gamma_n^{c_n} \bar{\gamma}_1^{c_{1+n}}\ldots\bar{\gamma}_n^{c_{2n}}, r=\{0,1,2,3\}, c_i=\{0,1\}
\end{align}
We can therefore represent a Majorana operator up to phases as a binary string $\textbf{c}\equiv (c_1,c_2\ldots c_{2n})$. In analogy to qubit Clifford gates, fermionic Clifford gates are defined as gates which map one Majorana string to a single other Majorana string.

Our two elementary Clifford gates are $\braidtwo$ 
\begin{align}
    \braidtwo(\bar{c}) = \exp(-\frac{\pi}{4} \bar{c}), \qquad \wt(\bar{c})=2,
\end{align}and $\braid$ 
\begin{align}
\braid(\bar{v})=\exp(i\frac{\pi}{4}\bar{v}), \qquad \wt(\bar{v})=4.
\end{align}
These gates act as
\begin{multline}
\braidtwo(\bar{c})(\bar{m})\braidtwo(\bar{c})^{\dagger} =\frac{1}{2}(1+(-1)^{\textbf{c}\Lambda_f \textbf{m}^T})\bar{m}+\\
\frac{1}{2}(-1+(-1)^{\textbf{c}\Lambda_f \textbf{m}^T})\Bar{c}\Bar{m}
\end{multline}
where $\Lambda_f=I+C$
\begin{align}
\braid(\Bar{v})[\bar{m}]\braid(\Bar{v})^\dagger) =\frac{i}{2}(1-(-1)^{\textbf{v}\Lambda_f \textbf{m}^T}) \Bar{v}\Bar{m}+\notag\\ \frac{1}{2}(1+(-1)^{\textbf{v}\Lambda_f \textbf{m}^T})\Bar{m}.
\label{eq:braidaction}
\end{align}
In a binary matrix form, $\braidtwo(\bar{c})$ can be written as
\begin{align}
\braidtwo(\Bar{c})_{\alpha,\beta}=
\begin{cases}
1 & \alpha = \beta ,\alpha,\beta \notin \Bar{c}\\
1 & \alpha \neq \beta,\alpha,\beta \in \Bar{c}\\
0 & \text{otherwise}
\end{cases},
\end{align}
and $\braid(\Bar{v})$ as
\begin{align}
\braid{(\Bar{v})}_{\alpha,\beta} = 
\begin{cases}
1 & \alpha=\beta, \alpha,\beta \notin \Bar{v} \\
1 & \alpha,\beta \in \Bar{v}, \alpha \neq \beta \\
0 & \text{otherwise}.
\end{cases}
\end{align}

The $\braid$ gate can also be understood as the generator of the logical braid group. The braid group (also known as the Artin braid group \cite{Kauffman_2018}), can be understood in terms of the $n-1$ braid generators $\sigma_i$ satisfying the relations
\begin{align}
    \sigma_i \sigma_j = \sigma_j \sigma_i \text{ for } |i-j|\geq2\\
    \sigma_i \sigma_{i+1}\sigma_i = \sigma_{i+1}\sigma_{i}\sigma_{i+1}.
\end{align}
For the $\braidtwo$ gates, the above are known to be satisfied. To see that these relations are also satisfied by the $\braid$ group, one can  define a group of logical operators:
\begin{align}
    \langle \Gamma_1, \Gamma_2, \ldots \Gamma_k\rangle
\end{align}
where $k=n/2$ such that 
\begin{align}
    \braid (\Gamma_1)\braid^{\dagger} = \Gamma_2 \quad \braid (\Gamma_2)\braid^{\dagger} = -\Gamma_1
\end{align}
It can also be shown that they satisfy the following relations
\begin{align}
    \sigma_i \sigma_j = \sigma_j \sigma_i \text{ for } |i-j|\geq2\\
    \sigma_i \sigma_{i+1}\sigma_i = \sigma_{i+1}\sigma_{i}\sigma_{i+1}
\end{align}
As an example, we can choose $\sigma_1 = \braid(1,2,3,4)$, $\sigma_2=\braid(2,3,4,5)$ and $\sigma_3=\braid(5,6,7,8)$, so it is obvious that two generators with indices two or more away do not overlap and hence commute. The proof for the second property is given in Appendix \ref{AppendixB}. A general $\braid$ generator can thus be constructed by conjugating a generator from the above list with $\braidtwo$ generators. 

\subsection{Implementation of the $\braidtwo$ and $\braid$ gates}
Throughout this paper, we will extensively use both $\braidtwo$ and $\braid$ gates as the physical Clifford operations. In this section, we discuss how these gates may be implemented in specific hardware.

\subsubsection*{Majorana nanowires}
In nanowires, measurement-based operations instead of unitary operations have been proposed to be the simplest to implement~\cite{Tran_optimizing, aasen2025roadmapfaulttolerantquantum}. To this end, one may write both $\braidtwo$ and $\braid$ gates in terms of projective measurements of 2-MZM or 4-MZM parities. A $\braidtwo$ gate can be expressed in terms of projectors, for example \cite{Tran_optimizing}:
\begin{align}
    \Pi_{+1}^{(34)}\Pi_{+1}^{(13)}\Pi_{+1}^{(23)} \Pi_{+1}^{(34)} = \Pi_{+1}^{(34)} \otimes \braidtwo(1,2)
\end{align}
here, $\Pi_{s}^{ij}=\frac{1}{2}(\mathbb{1}+is\gamma_i\gamma_j)$.

Note that the procedure above uses two ancilla Majorana modes but disentangles them at the end. Similarly, to implement a $\braid$ gate, one can prepare an ancilla state in the zero state
\begin{align}
    i \gamma_4 \gamma_5 \ket{\bar{0}} = \ket{\bar{0}}
\end{align}
so that one may implement a $\braid$ gate using the following projective and corrective operations
\begin{align}
    2 \,\braidtwo(2,5) \Pi_{-1}^{(24)}\Pi_{+1}^{(0134)} = \Pi_{+1}^{(45)}\otimes \braid(0,1,2,3)
\end{align}
where $\Pi_{+1}^{(ijkl)}=\frac{1}{2}(1+\gamma_i\gamma_j\gamma_k\gamma_l)$. Note that both operations reset the ancilla pair back to the original state. Therefore, unitary approaches that employ physical $\braidtwo$ and $\braid$ gates are theoretically equivalent to approaches involving 2-MZM and 4-MZM mid-circuit measurements and one can be compiled exactly in terms of the other. 

In \cite{aasen2025roadmapfaulttolerantquantum}, a measurement-based approach is proposed to perform single-qubit and two-qubit logical Clifford operations, but they depend on the ability to perform 2-MZM and 4-MZM measurements in a robust manner, along with using ancilla. To perform long range gates, one often needs an ancilla resource state in a logical GHZ which allows joint measurements of observables. Fault tolerant measurements then require a lattice surgery based approach, such as the one in highlighted in \cite{Litinski_Majorana}. The focus of our investigation is to expand the gate set by using Clifford operations available to the codes and building transversal gadgets for these codes.

\subsubsection*{Neutral atoms}

In neutral atoms, all braiding operations can be implemented unitarily.

The $\braidtwo$ gate in neutral atoms is implemented in two different ways, depending on whether the Majorana fermions on the same site or two different sites are braided. The former is realized by a local phase gate
$\exp(-i\frac{\pi}{2}n_i)$, where $n_i=a_i^{\dagger}a_i=\frac{1}{2}(1-i\bar{\gamma}_i\gamma_i)$. The latter is realized by decomposing the braiding gate into a tunneling and pairing gate~\cite{schuckert2024fermionqubitfaulttolerantquantumcomputing} 
\begin{align}
 &\exp(i\frac{\pi}{4} \bar \gamma_i \gamma_j)\notag\\&=\exp\left(i\frac{\pi}{4} \left(a_i^\dagger a_j +h.c.\right)\right)  \exp\left(i\frac{\pi}{4} \left(a_i^\dagger a_j^\dagger +h.c.\right)\right).   
\end{align} The tunneling gate is most easily implemented by either lowering the barrier between two sites in an optical lattice or by moving two tweezers close to each other, inducing tunneling~\cite{yan_two-dimensional_2022,gonzalez-cuadra_fermionic_2023}. The pairing gate is implemented by using bosonic molecules in a Bose-Einstein condensate, which provide a reservoir for creating and destroying fermion pairs~\cite{schuckert2024fermionqubitfaulttolerantquantumcomputing}.

The $\braid$ gate in neutral atoms is implemented using the density-density interaction $\exp(-i \pi n_in_j)$ and phase gates by decomposing
\begin{align}
    \exp(i\frac{\pi}{4}\bar{\gamma_i}\gamma_i \bar{\gamma}_k\gamma_k ) = \exp(-i \pi \left(n_i-\frac{1}{2}\right)\left(n_j-\frac{1}{2}\right)).
    \label{eq:braidgate}
\end{align}
Density-density interactions are implemented by using Rydberg interactions~\cite{jaksch2000,gonzalez-cuadra_fermionic_2023}. For a general $\braid$ gate, we can conjugate \ref{eq:braidgate} with $\braidtwo$ gates to get the desired form.

\section{Even and odd Majorana codes}\label{section: evenandodd}
 In qubit quantum codes, logical qubits are encoded into the +1 eigenspace of the stabilizer operators. Each code block can encode multiple logical qubits, but operators in one code block usually only have support over the corresponding block and adding multiple blocks corresponds to a tensor product of each block. In Majorana codes however, logical operators can be even or odd weight, and once multiple code blocks are considered, they no longer admit a tensor product structure. 
The 1D Kitaev chain previously thought to encode qubits, fails to do so when multiple chains are considered. We explore the consequences of even and odd codes in the following section, and further give a scheme using quantum reference frames to perform logical gates on odd codes. 
\begin{definition}
    An \textbf{even} code is defined as a code where $\Ptot\in S.$
\end{definition}
Let us look at a few properties of these even codes that come as a consequence:
\begin{itemize}
    \item All logical operators must have \textbf{even} weight.
    \item \textbf{Single logical per block:} If logical operators are constrained to a single block, one can only encode logical qubits here. If logicals are allowed to be non-local across blocks, then one can encode logical fermions.
    
    \item \textbf{Multiple logicals, single block:} One can construct either a set of logical qubits such that
    \begin{align}
        \{\bar{X_i},\bar{Z_i}\} = 0 \qquad [\bar{X_i},\bar{X_j}]=0
    \end{align}
    or a set of even weight logical fermions
    \begin{align}
    \{\Gamma_i,\bar{\Gamma}_j\} = 2\delta_{ij}
    \end{align}
    Or both, such that some degrees correspond to logical qubits, while others correspond to logical fermions. We can do this since this is just a basis choice, so logical fermions are related to logical qubits via a logical Jordan-Wigner transformation. 
   Note that we cannot inter-convert between fermionic and qubit terms using any set of physical operations. Unitary operations preserve the commutation relations, and it can be shown that such maps, while trace-preserving, are not completely positive, so no physical process can convert between the two. See Appendix \ref{section:JordanWignermap} for more details. 
   \item \textbf{Multiple logicals, multiple blocks:}
   Logical fermions with even weight in a block commute with logical fermions in a different block. In order to get actual logical fermions between multiple blocks, another non-local basis must be chosen that corresponds to a mutually anti-commuting set of operators.
\end{itemize}
The most familiar setting for fault tolerance, which is the one we use here, is having multiple blocks where each logical is confined to a single block. As a result, it is most natural to consider \textbf{logical qubits} to be encoded per block in even codes.

\begin{definition}
    An \textbf{odd} code is defined as a code where $\Ptot\notin S.$
\end{definition}
This entails the following properties for odd codes:
\begin{itemize}
\item Logical operators can be both \textbf{odd} and \textbf{even} weight. There must also be at least two odd weight logical operators. 

\item \textbf{Single logical degree per block:} If each logical is constrained to a single block, then one can choose two odd weight logical operators corresponding to a logical fermion. If logicals are allowed to be non-local, then one can encode logical qubits. 

\item Odd weight logicals can also correspond to logical qubits and can be chosen so they satisfy Pauli algebra. However, between blocks, they do not overlap and hence will fail to have the correct commutation relations. We can choose new logicals that fulfil the correct Pauli algebra, but these will be non-local between blocks now. We will explain the issues in greater detail using the Kitaev chain later. 

\item \textbf{Multiple logicals, single block:} Since a combination of odd and even logical operators within the same block is possible, one can choose all the odd operators to form a basis for logical fermions, and all the even weight operators to form a basis for logical Paulis. However, odd operators cannot be converted to even operators and vice versa in a closed system because of \textit{parity superselection rule}.

\item \textbf{Multiple logicals, multiple blocks:} If each logical operator is chosen to be local to a block, one can ask, what is the number of logical fermions per block? As specified, this depends on the chosen logical basis. The minimum number of logical operators with odd weight is two, if $\Ptot$ is chosen as one of the logical operators, and the rest except two operators corresponding to $\Gamma_1$ and $\bar{\Gamma}_1$ are even.
This corresponds to one logical fermion, and $k-1$ logical qubits.
To increase the number of logical fermions, one can pick a different logical basis choice in the following way: pick $\Gamma_1$ and $\bar{\Gamma}_1$ as the two odd logical operators to begin with, then choose $2l$ operators from the qubits operators to convert them into fermions via a logical Jordan Wigner transformation, let's call these operators $q_1,q_2,\ldots q_{2l}$. Next, multiply $\bar{\Gamma}_1$ by operators $q_1q_2\ldots q_{2l}$, and multiply each qubit operator by $\bar{\Gamma}_1$. This gives a new list of operators that are all odd weight and are all anticommuting. As an example, consider the following code
\begin{align*}
    S=\gamma_3\gamma_4\gamma_5\gamma_6\gamma_7\gamma_8\\
    \Gamma_1 = \gamma_1 \quad \bar{\Gamma}_1 = \gamma_2\\
    \bar{X}_2 =i \gamma_3 \gamma_4 \quad \bar{Z}_2 = i\gamma_4 \gamma_5\\
    \bar{X}_3 = i\gamma_6 \gamma_7 \quad \bar{Z}_3 =i \gamma_7 \gamma_8
\end{align*}
With the above procedure, we can convert qubits into fermions
\begin{align*}
S=\gamma_3\gamma_4\gamma_5\gamma_6\gamma_7\gamma_8\\
\Gamma_1 = \gamma_1 \quad \bar{\Gamma}_1 = \gamma_2 \gamma_3 \gamma_5\\
\Gamma_2 = \gamma_2\gamma_3 \gamma_4 \quad \bar{\Gamma}_2 =  \gamma_2 \gamma_4 \gamma_5\\    
\bar{X}_3 = i\gamma_6 \gamma_7 \quad \bar{Z}_3 = i\gamma_7 \gamma_8
\end{align*}
\end{itemize}
An important departure now, from the even codes case, is that for the setting of logicals remaining local to a block and having multiple blocks, one can have either \textbf{logical qubits} or \textbf{logical fermions} per block, or \textbf{both}, and the number can be varied as per the prescription given above at the time of basis choice, with the caveat that at a minimum, each block contains \textbf{one} logical fermion. 

\subsection{Example: even and odd codes}
A single Kitaev chain is an instance of an odd code. Traditionally, 1D Kitaev chain has been used as a model to encode topologically protected qubits, with qubits encoded in the Majorana zero energy edge modes.
\begin{align}
    \bar{X_1}=\gamma_1 \qquad \bar{Z_1}=i\gamma_1\bar{\gamma}_1
\end{align}
where $\gamma_i,\bar{\gamma_i}$ denote the edge modes. In isolation, this encodes a single logical qubit. However, if we add one more chain, it is clear that the logical operators between multiple blocks anti-commute. In order to encode logical qubits, one needs a fermion-to-qubit encoding, e.g. Jordan-Wigner transformation. Under such a transformation, the new logical operators are given as
\begin{align}
    \bar{X_i}=\prod_{k<i}\gamma_{k}\bar{\gamma_k} \gamma_i \qquad \bar{Z_i}=i\gamma_{i}\bar{\gamma_i},
\end{align}
where we assume that the chains are attached to each other in a one-dimensional geometry. 
There are other encodings, e.g. in Bravyi Kitaev Superfast encoding, one finds:
\begin{align}
    \bar{X_i}=\prod_{k<i}\gamma_k\bar{\gamma}_k \gamma_{i}\gamma_{i+1} \qquad \bar{Z_i} =  \prod_{k\leq i}\gamma_k\bar{\gamma}_k 
\end{align}
In both cases, it is obvious that the value of each new block depends on the fermion parity of previous blocks. This is a common feature of \textit{ancilla free} fermion-qubit encodings, since the encoding needs to preserve the overall sign. This presents issues from a fault tolerance perspective, since each block can not be thought of as a parallel error-correcting code, which makes the framework of transversal gates non-trivial for these codes, since a fault in one codeblock can affect the value of all subsequent logical operators.  It also makes certain logical gates much harder to implement; for instance, performing a logical Hadamard on the $n^{\text{th}}$ qubit requires a $\mathcal{O}(n)$ depth physical circuit, which is not practical. As a result, we prefer to keep each logical operator local to a codeblock, and hence a Kitaev chain would encode a single logical fermion.

Two Kitaev chains can also be combined, in what is known as a Majorana tetron code. 
\begin{align}
    \bar{X} = i\gamma_1\gamma_2 \quad \bar{Z} = i\gamma_1 \bar{\gamma}_1
\end{align}
And the total parity of the two chains is fixed to the even parity sector
\begin{align}
    S=\gamma_1\bar{\gamma}_1\gamma_2\bar{\gamma}_2.
\end{align}
Here each block of four modes commutes with operators in a different block, so it can naturally only encode logical qubits. 
\section{Quantum reference frames for gates on odd codes}\label{section:reference}

We described above that Majorana systems are subject to parity superselection rules, which further entails that all physical operations must correspond to even parity operations. But one may ask, is there a way around these rules? In \cite{BartlettSuperselection}, the authors describe that superselection rules are mathematically equivalent to a lack of reference frame, and hence one can lift superselection rules by defining an appropriate reference system.

As an example, consider Alice and Charlie, where Alice has a state she considers to be a superposition of opposite parity values $\ket{\psi}_A =  \frac{1}{2}(\alpha\ket{0} + \beta \ket{1})$. Alice describes her states in the fermionic occupation number basis using her own phase reference, while Charlie's reference is rotated by a certain angle $\phi$. If Charlie knows this angle, then he can describe Alice's states by rotating them by the corresponding angle, i.e. $\ket{\psi}_C = e^{i\phi R}\ket{\psi}_A$, where $\phi$ is a $Z_2$ variable and $R$ is the rotation generator. However, if Charlie does not know his angle with respect to Alice, then he will perform an average over the angles $\phi$, and such an averaging will yield a completely mixed state. This leads to the loss of all coherence information, and is known as the parity superselection rule, where coherent superpositions of even and odd states are not allowed. On the other hand, Alice is allowed to prepare a state which is a coherent superposition of same $\Ptot$ states and Charlie will prepare the same state up to an irrelevant global phase. Such a state is termed a \textit{parity invariant state}. Note that the superselection rule arises because of an absence of a global \textit{reference system} relative to which both Alice and Charlie can measure their states. If both parties share a consistent reference frame, then they can each `sync' the changes the other party made without loss of coherence. For instance, one can prepare even and odd parity invariant states of the form:
\begin{align}
    \ket{\Psi_e} = \frac{1}{2}(\ket{0}_R\ket{0}_S+\ket{1}_R\ket{1}_S)\\
    \ket{\Psi_o} = \frac{1}{2}(\ket{0}_R \ket{1}_S + \ket{1}_R \ket{0}_S)
\end{align}
The subscript $R$ and $S$ indicate that states correspond to a product Hilbert space $H_R\wedge H_S$, where $\wedge$ is a wedge product between states $i$ and $j$
\begin{align}
    \ket{i}\wedge \ket{j}=-\ket{j}\wedge \ket{i}
\end{align}
We have written the states above in the eigenbasis of $P_R$ and $P_S$. One can also choose a different basis $\ket{i\oplus j}\ket{j}$ corresponding to the observables $\Ptot$ and $P_S$, which formally corresponds to an isomorphism from $H_R\wedge H_S$ to $H_\text{tot}\wedge H_S$. By making this basis choice, we are describing our system relative to the global degree, and in such a setting, the RF is not quantum, thus it is described as `dequantization' in the literature \cite{BartlettSuperselection}. Thus in this dequantized picture, we have:
\begin{align}
    \ket{\Psi_e}=\frac{1}{\sqrt{2}}(\ket{0}_{\Ptot}(\ket{0}_S+\ket{1}_S))\\
    \ket{\Psi_o}=\frac{1}{\sqrt{2}}(\ket{1}_{\Ptot}(\ket{0}_S+\ket{1}_S))
\end{align}
which means that \textit{relative} to the reference, one can now prepare coherent superpositions of even and odd states, and hence `violate' the parity SSR.

For multiple systems, such as Alice and Charlie in the above example, one can prepare 
\begin{align}
    \ket{\Psi_e} =\frac{1}{2}(\ket{0}_R(\ket{00}_{AC}+\ket{11}_{AC})+\ket{1}_R(\ket{01}_{AC}+\ket{10}_{AC})) 
    \label{eq:parityinvariantstate}
\end{align}
which after dequantization, yields:
\begin{align}
    \ket{\Psi_e}=\frac{1}{2}\ket{0}_{\Ptot}(\ket{0}+\ket{1})_A(\ket{0}+\ket{1})_C
\end{align}
thus proving that relative to this basis, coherent superpositions of multiple systems can be prepared. Moreover, additional states can be tensored in and be later used as ancillas while remaining uncorrelated to the reference. To prepare such a state, one can construct a circuit composed of $\braidtwo$ operations. More details in Appendix \ref{section:encodingcircuitforref}. 

We now describe performing gates on odd codes using the reference.
\begin{itemize}
    \item \textbf{Applying odd logical Majoranas:} Here, the example is of logical `$X$' gate. This corresponds to a bit flip on the system, thus we call this a logical `$X$'. Note that such an operation is not parity preserving, hence unphysical, but we can bring the reference here and define an operation that is parity preserving globally. 
    \begin{align}
    \bar{X}_{pp} = \gamma_R \gamma_A
    \end{align}
    For instance, if the initial state is:
    \begin{align}
        \ket{\psi}_{RAC}=\ket{000}
    \end{align}
    Applying logical `$X$' gives
    \begin{align}
        \bar{X}_{pp}\ket{\psi} =\ket{110}
    \end{align}
    Writing in the dequantized basis and tracing over the global state gives
    \begin{align}
        \Tr_{gl}[\bar{X}_{pp}\ket{\psi}\bra{\psi}\bar{X}_{pp}^{\dagger}] = \ket{10}\bra{10}_{AC}
    \end{align}
\item \textbf{Applying logical Hadamard}: Logical Hadamard is desribed strictly with respect to the reference:
\begin{align}
    \bar{H}_{pp} = \exp(-\frac{\pi}{4}\bar{\gamma}_A \gamma_R)
\end{align}
On the state, the action is given as
\begin{align}
    \bar{H}_{pp}\ket{000}_{RAC}=\frac{1}{2}(\ket{00} + \ket{11})_{RA}\ket{0}_C
\end{align}
Externalising the reference and tracing over, we get
\begin{align}
  \mathrm{Tr}_{\mathrm{tot}}[\bar{H}_{pp}\ket{000}_{RAC}\bra{000}\bar{H}^{\dagger}_{pp}] =  \ket{+}_A\bra{+}_A \ket{0}_C\bra{0}_C
\end{align}
On the operators, this gives $\bar{H}_{pp}\bar{X}_{pp}\bar{H}^{\dagger}_{pp}=\eta \bar{Z}_{pp}$ and $\bar{H}_{pp}\bar{Z}_{pp}\bar{H}^{\dagger}_{pp}=\xi \bar{X}_{pp}$ where $\eta,\xi =\pm 1$. 
\item \textbf{Applying two mode gates:} The reference scheme can be used to apply two mode operations that violate parity, but since the construction is more complex, we defer the discussion to Section \ref{section:transversalgates}.

\end{itemize}
\subsection{Limitations of reference frame construction}
This approach can be used to thus apply any odd weight logical operation. Note that the same reference mode is used for any odd logical operation, so an error in the reference mode will lead to the application of faulty gates. To make this mode less vulnerable, one can use code concatenation such that the reference mode is now a logical Majorana instead, and then error correction can be applied periodically to flush out the errors from the reference. Moreover, this approach does add temporal overhead almost inevitably, since every odd gate must utilize the same reference block. For instance, a parallel circuit of Hadamard operations turns into a serial set of operations acting between the reference block and the respective data block. In the worst case, it can thus turn $\bar{H}_{pp}^{\otimes n }$ into an $O(n)$ circuit. It is possible in some cases to reduce this overhead temporally but then may use ancilla modes such as in Figure: \ref{fig:coupledtetron} and hence correspond to a spatial overhead. 

We also note that a reference system construction was also developed in \cite{ott2024errorcorrectedfermionicquantumprocessors} to perform number non-conserving operations; however our construction requires a quantum reference system, hence the size does not scale with the number of modes in the data code block.

\section{Transversal gate gadgets} \label{section:transversalgates}

Typically for qubit codes, transversality is the first step towards fault tolerance because of its simple construction and since it prevents the propagation of errors which is vital for a fault-tolerant gadget. The notion of transversality is often variable: sometimes it corresponds to performing the same gate on all the physical qubits in the same block, which is sometimes referred to as \textit{strongly transversal} 
\cite{TexeiraStronglyTransversal}, whereas on other occasions the physical gate may vary on each qubit but still admit a tensor product decomposition, and it may also not act on each qubit too. In our framework, when considering gates acting on multiple blocks, we consider the former definition which corresponds to the same gate acting on each \textit{mode} in a code block. Another critical aspect when considering the code's protection is also the fact that it may treat even and odd errors separately, and while it may have good protection against even parity errors, it may be vulnerable to odd errors, and this is captured by $l_{\text{even}}$. However, there are issues when considering this parameter when one has multiple code blocks and transversal gates. Firstly, a physical process that results in an even parity error can affect a single code block, but it can also affect multiple code blocks, such that it acts as an odd parity error, so superselection does not actually help boost the code protection. (This was also alluded to in \cite{bravyi_majorana_2010}.) Secondly, if one considers a transversal gate between multiple blocks, it may fail and change an error which is even on one code block to an error that is odd on both. In other words, local parity conservation is too strict of a constraint for error models and gadgets to have, so the significance of $l_{\text{even}}$ for a block is minimal. 

The Clifford operations available to us are $\braidtwo$ and $\braid$. For a transversal operation, we require the stabilizer to be preserved, without increasing the weight of errors beyond what can be detected or corrected. We have $\braidtwo$ and $\braid$ gates at our disposal in Majorana codes, which acting on two or four copies of the code automatically preserve the code space of any Majorana code. 
\subsection{Detail of $\braidtwo$ and $\braid$ transversality}

Let us consider the action of $\braidtwo$ on two codeblocks. The stabilizer is given by $S=\langle M_i\otimes I, I\otimes M_i\rangle$, where $M_i$ are the generators on each individual block. A transversal $\braidtwo$, denoted as $(\braidtwo)^T$, acts on every mode $i$ in block 1 to every mode $i$ in block 2. For even weight operators, the transversal gate acts as a `swap' gate since the global sign disappears. This transforms the stabilizers as
\begin{align}
    (\braidtwo)^T (M_i\otimes I) (\braidtwo^{\dagger})^T = I\otimes M_i
\end{align}
On the logicals, this corresponds to
\begin{align}
    (\braidtwo)^T (\Bar{X}\otimes I) (\braidtwo^{\dagger})^T &= I\otimes \Bar{X}
    \\
    (\braidtwo)^T (\Bar{Z}\otimes I) (\braidtwo^{\dagger})^T &= I\otimes \Bar{Z}
\end{align}
Hence, the logical SWAP can be carried out transversally. 
\\
Let us next look at the $\braid$ gate. This is now acting on four blocks of the code, so the stabilizer is $S=\langle M_i \otimes I\otimes I\otimes I, I\otimes M_i\otimes I\otimes I, I\otimes I\otimes M_i\otimes I, I \otimes I \otimes I \otimes M_i \rangle$. On the logicals, a transversal $(\braid)^T$ corresponds to
\begin{align}
    (\braid)^T (\Bar{X}III) (\braid^{\dagger})^T = I\Bar{X}\Bar{X}\Bar{X}
\\
 (\braid)^T (\Bar{Z}III) (\braid^{\dagger})^T = I\Bar{Z}\Bar{Z}\Bar{Z}   
\end{align}

However, certain operators commute with this $(\braid)^T$ gate. For instance, note that 
\begin{align}
    (\braid)^T (\bar{X}\bar{X}II) (\braid^{\dagger})^T = \bar{X}\bar{X}II\\
    (\braid)^T (\bar{Z}\bar{Z}II) (\braid^{\dagger})^T = \bar{Z}\bar{Z}II
\end{align}
This implies that states prepared in a logical Bell state remain invariant under this transversal gate.
\subsection{Mode permutation to achieve logical gates}
Consider a Majorana code where $\bar{X}$ and $\bar{Z}$ are both even and have the same weight. One toy example is the Majorana tetron code, where weight of all three logical operators is the same. Hence, a single $\braidtwo$ gate suffices to perform a logical Hadamard or a logical phase gate, upto Paulis.
A single permutation gate, however, can also fail, and can potentially cause a weight one error to become a weight two error. To avoid this, we can apply an ancilla mediated swap \cite{gottesman2024surviving}, described as follows: prepare an additional ancilla codeblock, and now instead of swapping directly between the modes 1 and 2, one can have three swap gates for each swap gate. This way, even if one swap gate fails, weight-1 errors do not become weight-2 errors on the data block, c.f. Fig. \ref{fig:ancillamediatedswap}. Note that while a single braid gate is not strictly a swap gate, three braids preserve the braid structure i.e. we want $\gamma_1\rightarrow \gamma_2, \gamma_2\rightarrow -\gamma_1$, and it is easy to check that three braids below give us $\gamma_1\rightarrow \gamma_2, \gamma_2\rightarrow -\gamma_1, \gamma_a\rightarrow \gamma_a$. 
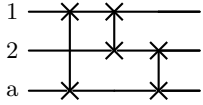
\begin{figure}
    \centering
\begin{quantikz}
\lstick{1} & \swap{2} & \swap{1}     &  & \qw \\
\lstick{2} &  & \swap{0}  & \swap{1} & \qw \\
\lstick{a} & \swap{0}     &        & \swap{0} & \qw
\end{quantikz}
\caption{Ancilla mediated swap. Note that typically $\braidtwo$ gates are considered robust to local perturbations since it is \textit{topologically protected}.}
\label{fig:ancillamediatedswap}
\end{figure}

A second example of permutation based gates comes from exploiting the automorphism structure of quantum codes. The automorphism group of a code $\text{Aut}(C)$ is defined by the set of logical gates that arise from permutation of modes such that they preserve the stabilizer group. For Majorana codes, one notable example of a code family with a non-trivial automorphism group is the Majorana Reed-Muller code, which are derived from applying Lemma \ref{lemma: lemma1} to classical Reed-Muller codes $\text{RM}(r,m)$ \cite{vijay2017quantumerrorcorrectioncomplex}. In order to exhibit weak self-duality, the parameters must satisfy the constraint that $m\geq 2r+1$. As is pointed out in \cite{vijay2017quantumerrorcorrectioncomplex}, the Majorana code obtained from an RM($1,m$) code $[n=2^m,m+1,d]$ is $[[2^m,2^m-m-1,d^{\perp}]]$.  

For example, $\text{RM}(1,3)$ saturates this bound, and in fact corresponds to the extended Hamming code. 
More generally, the $\text{Aut}(C)$ for Hamming codes is $\text{GL}(k,2)$, while the $\text{Aut}(C)$ of Reed-Muller codes is given by $\text{GL}(m,2)$. 

Consider the case of $\text{RM}(1,4)$. The parity check matrix is given:
\begin{align}
H=
\begin{bmatrix}
 1 & 1 & 1 & 1 & 1 & 1 & 1 & 1 & 1 & 1 & 1 & 1 & 1 & 1 & 1  & 1\\
 0 & 0 & 0 & 0 & 0 & 0 & 0 & 0 & 1 & 1 & 1 & 1 & 1 & 1 & 1  & 1\\
 0 & 0 & 0 & 0 & 1 & 1 & 1 & 1 & 0& 0& 0 & 0 & 1 & 1 & 1 & 1\\ 
 0 & 0 & 1 & 1 & 0 & 0 & 1 & 1 & 0 & 0 & 1 & 1 & 0 & 0 & 1 & 1\\
 0 & 1 & 0 & 1 & 0 & 1 & 0 & 1 & 0 & 1 & 0 & 1 & 0 & 1 & 0 & 1
\end{bmatrix}
\end{align}
The code parameters are $[[8,3,4]]_f$. The logical basis vectors can be found by:
\begin{align}
    H \Lambda_f \bar{L} = 0
\end{align}
i.e. vectors that commute with the stabilizer but are not contained inside the stabilizer.  This corresponds to:

\begin{align}
    \Bar{Z}_1 &= \gamma_{11} \gamma_{12} \gamma_{13} \gamma_{14}
    &\quad
    \Bar{X}_1 &= \gamma_5 \gamma_7 \gamma_{11} \gamma_{12} \gamma_{14} \gamma_{15} \\
    \Bar{Z}_2 &= \gamma_{13} \gamma_{14} \gamma_{15} \gamma_{16}
    &\quad
    \Bar{X}_2 &= \gamma_3 \gamma_5 \gamma_{11} \gamma_{14} \gamma_{15} \gamma_{16} \\
    \Bar{Z}_3 &= \gamma_2 \gamma_3 \gamma_5 \gamma_7 \gamma_{10} \gamma_{11} \gamma_{13} \gamma_{15}
    &\quad
    \Bar{X}_3 &= \gamma_5 \gamma_6 \gamma_9 \gamma_{11} \gamma_{14} \gamma_{15}
\end{align}
One can do the following permutation using $\braidtwo$ gates:
\begin{align}
    G=(5,7)(6,8)(13,15)(14,16)
\end{align}
This permutation preserves the stabilizer and transforms the logical operators as follows
\begin{align}
    \Bar{Z}_1 &\rightarrow \Bar{Z}_1 \Bar{Z}_2 
    &\qquad 
    \Bar{X}_1 &\rightarrow \Bar{X}_1 \Bar{Z}_2 \\
    \Bar{Z}_2 &\rightarrow \Bar{Z}_2
    &\qquad 
    \Bar{X}_2 &\rightarrow \Bar{X}_1 \Bar{X}_2 \Bar{Z}_1 \Bar{Z}_2 \\
    \Bar{Z}_3 &\rightarrow \Bar{Z}_3 
    &\qquad 
    \Bar{X}_3 &\rightarrow \Bar{X}_3
\end{align}
This corresponds to the logical gate 
\begin{align}
 G= \CZ(1,2)\CNOT(2, 1)  
\end{align}
Note that since every logical operator has even weight, we do not need to worry about the phase resulting from $\braidtwo$ gates. It is also interesting to note that swap operations in this code permit both $\CNOT$ and $\CZ$ gates, as opposed to permutation gates in qubit codes which correspond only to $\CNOT$ gates in the absence of further symmetries in the code structure.

One can also have Majorana codes with transversal non Clifford gates, but we defer the discussion of this code to Section \ref{section:nonClifford}. 

\section{Fault tolerant gate gadgets: supplementing transversal operations with measurements}\label{section:nontransversal}
Transversal gates are limited by the code structure and in order to do a full set of Clifford gates, we often need to supplement with other techniques. For instance, if the logical operators have unequal weights, such as in a Majorana color code where the $X$ and $Z$ logical operators are different weights, then permutations are no longer enough.  We suggest here to use more ancilla code blocks initialized in a particular state, along with measurements, which takes us out of the setting of just transversal gates.
\subsection{Even codes}
As a toy example, we can take two tetron codes and choose logicals that have support over both tetron blocks. The standard choice of logical operators for two tetrons are:
\begin{align}
    \bar{X}_1 = \gamma_{\mathrm{a}}\gamma_{\mathrm{b}}\quad \bar{Z}_1 = \gamma_{\mathrm{b}}\gamma_{\mathrm{c}}\\
    \bar{X}_2 = \gamma_{\mathrm{e}}\gamma_{\mathrm{f}} \qquad \bar{Z}_2 = \gamma_{\mathrm{f}} \gamma_{\mathrm{g}}
\end{align}
We instead pick logical operators to get a `coupled tetron' code
\begin{align}
    \bar{X}_1=\gamma_\mathrm{a}\gamma_\mathrm{b}\gamma_\mathrm{e}\gamma_\mathrm{f}\quad
    \bar{Z}_1=\gamma_\mathrm{a}\gamma_\mathrm{c}\\
    \bar{X}_2=\gamma_\mathrm{e}\gamma_\mathrm{f}\quad
    \bar{Z}_2=\gamma_\mathrm{a}\gamma_\mathrm{c}\gamma_\mathrm{e}\gamma_\mathrm{g}
\end{align}
which now gives us two logical operators but the $X$ and $Z$ weights are unequal. Below, we give a transversal scheme using ancilla to perform a logical Hadamard on the first qubit. Note that the code has distance two, so it can only detect weight one errors, and the objective of the scheme is to make sure that each codeblock still has weight one errors while preserving the stabilizer and enacting the desired logical gate. For the sake of this example we chose this code, but it can also be generalized for a higher distance code with perhaps different overheads on ancilla count and circuit depth.

A logical Hadamard is then implemented using the following procedure:
\begin{itemize}
    \item Take four more ancilla blocks, all encoded in the +1 eigenspace of the standard tetron code. Additionally, ancilla 2 and ancilla 3 are prepared in a logical Bell state, meaning that $\bar{X}_\mathrm{a_2}\bar{X}_\mathrm{a_3}$ and $\bar{Z}_\mathrm{a_2}\bar{Z}_\mathrm{a_3}$ are also in the stabilizer. The logical state of the first and last ancilla block can be arbitrary.  
    \item Now do a transversal $\braid$ gate between the first three ancilla and the first data block. This transforms $\Bar{X}_1$ to $\Bar{X}_2\Bar{X}_\mathrm{a_1}\Bar{X}_\mathrm{a_2}\Bar{X}_\mathrm{a_3}$ and $\bar{Z}_1$ to $\bar{Y}_{a_1}\bar{Y}_{\mathrm{a}_2}\bar{Y}_{a_3}$. By multiplying with the stabilizer, we get $\bar{X}_1\rightarrow\bar{X}_2\bar{X}_\mathrm{a_1}$ and $\bar{Z}_1\rightarrow \bar{Y}_{\mathrm{a}_1}$.   
    \item Do a second transversal $\braid$ gate between the first two blocks and the first and last ancilla. This transforms $\Bar{Z}_1$ to $\Bar{Z}_2\Bar{Z}_\mathrm{a_2}\Bar{Z}_\mathrm{a_3}\Bar{Z_\mathrm{a_4}}$, and then multiplying with stabilizer gives $\Bar{Z}_2\Bar{Y}_\mathrm{a_4}$, but leaves $\bar{X}_2\bar{X}_\mathrm{a_1}$ invariant. 
    \item The weight of both logicals is the desired weight, but we still want to do operations transversally that preserve the weight and the stabilizer. The third transversal $\braid$ converts $\bar{X}_1$ to $\bar{Z}_1\bar{X}_{\mathrm{a_1}}$ and $\bar{Z}_1$ to $\bar{X}_1\bar{Y}_{\mathrm{a_4}}$. The ancilla qubits can be measured out in the correct basis. The whole circuit is shown in Fig.~\ref{fig:coupledtetron}. 
    \item Note that this circuit only applies to this example, but similar techniques may be used to construct Hadamard for other codes with this property.
\end{itemize}
\begin{figure}[h]
    \centering
    \includegraphics[width=0.9\linewidth]{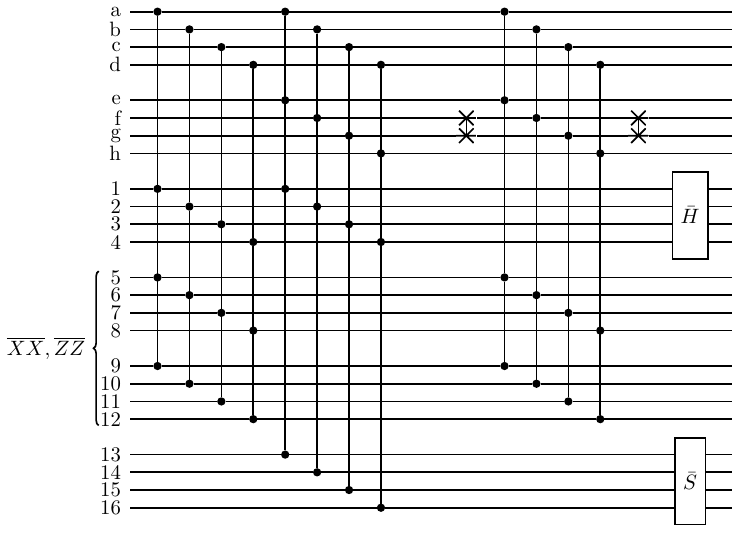}
    \caption{Logical Hadamard for coupled tetron, : $\bar{X}_1=\gamma_\mathrm{a}\gamma_\mathrm{b}\gamma_\mathrm{e}\gamma_\mathrm{f}$, $\bar{X}_2=\gamma_\mathrm{e}\gamma_\mathrm{f}$, $\bar{Z}_1=\gamma_\mathrm{a}\gamma_\mathrm{c}$, $\bar{Z}_2=\gamma_\mathrm{a}\gamma_\mathrm{c}\gamma_\mathrm{e}\gamma_\mathrm{g}$. The four mode gate denotes $\braid$ and two mode gate denotes the $\braidtwo$. The ancillas are then measured out in the occupation number basis i.e. eigenbasis of $\bar{Z}=i\gamma_2\gamma_3$ for each ancilla block.}
    \label{fig:coupledtetron}
\end{figure}
Another example is constructing transversal $\CNOT$ on even codes, akin to gadgets constructed in \cite{GottesmanTheoryofFaultTolerance}. Consider a circuit on four blocks, where the first two blocks are data blocks, labelled a and b and the last two blocks are ancilla blocks, labelled 1 and 2.
The circuit is given by the following unitary
\begin{align}
    U= &\quad\braidtwo^T(b,1)\braidtwo^T(a,b)\notag\\&\times\braidtwo
    ^T(a,2) \braid^T(a,b,1,2)
\end{align}
The circuit has the following logical action:
\begin{align*}
\bar{X}_\mathrm{a}\rightarrow \bar{X}_\mathrm{a}\bar{X}_\mathrm{b}\bar{X}_1\qquad \bar{Z}_\mathrm{a}\rightarrow \bar{Z}_\mathrm{a}\bar{Z}_\mathrm{b}\bar{Z}_1 \\
\bar{X}_\mathrm{b}\rightarrow \bar{X}_\mathrm{b}\bar{X}_1\bar{X}_2 \qquad
\bar{Z}_\mathrm{b}\rightarrow \bar{Z}_\mathrm{b}\bar{Z}_1\bar{Z}_2 \\
\bar{X}_1\rightarrow \bar{X}_\mathrm{a}\bar{X}_1\bar{X}_2 \qquad
\bar{Z}_1\rightarrow \bar{Z}_\mathrm{a}\bar{Z}_1\bar{Z}_2\\
\bar{X}_2\rightarrow \bar{X}_\mathrm{a}\bar{X}_\mathrm{b}\bar{X}_2 \qquad
\bar{Z}_2\rightarrow \bar{Z}_\mathrm{a}\bar{Z}_\mathrm{b}\bar{Z}_2 
\end{align*}
while the stabilizer is preserved. Initially, the ancilla blocks are prepared in the $\bar{Z_1}$ and $\bar{Z_2}$ state. Since we fix the state of the ancilla blocks, the ancilla blocks now correspond to stabilizer states, and $\bar{Z}_1$ and $\bar{Z}_2$ correspond to stabilizer elements.   At the end, we measure the ancilla in $\bar{X}_1$ and $\bar{X}_2$ basis. Since the $\bar{Z}$ logical representatives anticommute with the measurement, one can multiply with stabilizer elements of the ancilla states (for $\bar{Z_a}$ multiply by $\bar{Z}_1.\bar{Z}_2'$ and for $\bar{Z_b}$ multiply by $\bar{Z}_1$ where the primed operators represent transformed operators). This gives the desired transformation for a transversal CNOT. 
\subsection{Odd codes}

Another approach can be to encode logical fermions into the odd operators of the code, but it may be desirable to construct a CNOT between states in the number basis. For instance, while the physical terms in a closed fermionic Hamiltonian are in a fixed parity sector, it may be desirable in a simulation algorithm to perform a parity switching operation if the system is not closed and interacts with a fermionic bath. The logical CNOT in this context would mean:
\[
\begin{aligned}
    \Gamma_a &\rightarrow \Gamma_a \Gamma_b\\
    \Gamma_b &\rightarrow \Gamma_b\\
    \Gamma_a\bar{\Gamma}_a&\rightarrow \Gamma_a \bar{\Gamma}_a\\
    \Gamma_b\bar{\Gamma}_b& \rightarrow \Gamma_a \bar{\Gamma}_a\Gamma_b \bar{\Gamma}_b
    \label{eq:CNOT}
\end{aligned}
\]
Such an operation clearly switches parity between odd and even logical operators such as $\Gamma_a\rightarrow \Gamma_a \Gamma_b$, and thus would not be allowed without a reference. The odd operators are thus defined, using ideas from section \ref{section: evenandodd}
\begin{align}
    \Gamma_a' = \Gamma_a \Gamma_r\\
    \Gamma_b' = \Gamma_b \Gamma_r
\end{align}

Note that $\Gamma_a$ and $\Gamma_b$ anti-commute thus correspond to logical fermions, and similarly, $\Gamma_a'$ and $\Gamma_b'$ are also chosen to anti-commute.
In this case, we also need to fix the state of the ancilla, and here we choose it to be an eigenstate of
\begin{align}
    S_1 = i\Gamma_1\Gamma_r, \quad S_2 = i\Gamma_2 \bar{\Gamma}_r
\end{align}
The circuit we apply to get the desired ancilla state is
\begin{align}
    C = \braidtwo(\bar{2},\bar{r})\braidtwo(\bar{1},r)
\end{align}
where $\braidtwo(\bar{2},\bar{r})$ is a product of $\braidtwo$ gates between the barred Majorana operators in block two and the reference block (block $r$), and $\braidtwo(\bar{1},\bar{r})$ is a product of $\braid$ gates between the barred Majorana operators between block 1 and unbarred operators in block $r$. 

The initial stabilizer elements for the ancilla are $i\Gamma_1\bar{\Gamma}_1$ and $i\Gamma_2\bar{\Gamma}_2$, which can be conjugated by the above circuit $C$ to obtain the required state.

Next, a circuit consisting of transversal $\braid$ and $\braidtwo$ can now be executed as
\begin{align}
    U = \braidtwo^T(a,b)\braidtwo^T(b,2)\braid^T(a,b,1,2)
\end{align}
giving
\begin{align}
    \Gamma_a':\Gamma_a \Gamma_r \rightarrow \Gamma_a\Gamma_1 \Gamma_2 \Gamma_r,\\
    \Gamma_b':\Gamma_b \Gamma_r \rightarrow \Gamma_a\Gamma_b \Gamma_1 \Gamma_r,\\
    \Gamma_a\bar{\Gamma_a}\rightarrow \Gamma_a\bar{\Gamma_a} \Gamma_1\bar{\Gamma_1} \Gamma_2\bar{\Gamma_2}\\
    \Gamma_b\bar{\Gamma_b}\rightarrow \Gamma_a\bar{\Gamma_a} \Gamma_b\bar{\Gamma_b} \Gamma_2\bar{\Gamma_2}
    \end{align}
on the data block, while the ancilla operators transform as
\begin{align}
    \Gamma_1 \Gamma_r \rightarrow \Gamma_a\Gamma_b \Gamma_2 \Gamma_r,\\
    \Gamma_2 \bar{\Gamma}_r \rightarrow \Gamma_b\Gamma_1 \Gamma_2 \bar{\Gamma}_r.
\end{align}
Lastly, the ancilla blocks are measured in the occupation number basis i.e. $i\Gamma_1\bar{\Gamma}_1$ and $i\Gamma_2\bar{\Gamma}_2$ basis. Note that the parity operators i.e. $\Gamma_a\bar{\Gamma}_a, \Gamma_b\bar{\Gamma}_b$ on the data blocks commute with this measurement, whereas the primed operators do not, but similar to the previous examples, they can be multiplied by the ancilla stabilizer to fix them. This gives us
 
\begin{align}
    \Gamma_a' &\rightarrow \Gamma_a \Gamma_b \Gamma_r \bar{\Gamma}_r\\
    \Gamma_b'&\rightarrow \Gamma_b \bar{\Gamma_r}.
\end{align}
 We can see here that had we chosen two commuting operators, then we would have got an actual logical CNOT. But since they are fermionic in nature, and they all mutually anti-commute, instead we get a 
 \begin{align}
    \Gamma_a' \rightarrow \Gamma_a' \Gamma_b' \Gamma_r \bar{\Gamma}_r\\
    \Gamma_b'\rightarrow \Gamma_b' \Gamma_r \bar{\Gamma}_r
\end{align}
which corresponds to the following logical gate

 \begin{align}
\mathrm{CNOT}(1,2)\mathrm{CZ}(a,r)\mathrm{CZ}(b,r) 
 \end{align}

\section{Fault-tolerant error correction: Steane error correction gadget}\label{section:Steane}

In qubit CSS codes, transversal CNOT and transversal measurements allow us to use the Steane error correction gadget, which is fault-tolerant. As opposed to Shor, Steane measurement does not require repeated rounds of syndrome extraction, hence, for Majorana codes, it would be useful to have a similar scheme. 
Note that in Majorana systems, we can only measure Majorana bilinears. For a code on $2n$ Majorana modes, the size of the Majorana group $\mathrm{Maj}(2n)$ is $2^{2n+1}$. However, the size of the isotropic space is $2^n$, and thus one needs maximally $n$ generators that correspond to an Abelian subgroup of $\mathrm{Maj}(2n)$. This implies that one can make at most $n$ simultaneously commuting measurements.

Note that unfortunately, physical CNOTs are not allowed, however, we can use the $\braid$ gates to design a gadget. See figure \ref{fig:steanegadget}.
\begin{figure}
    \centering
    \includegraphics[trim={5cm 20cm 11cm 4cm}, clip=true, width=0.6\linewidth]{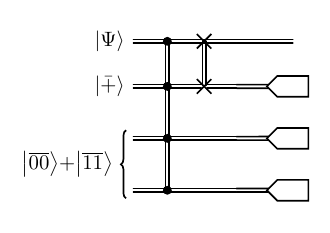}
    \caption{Steane gadget for four blocks of $[[n,1,d]]_f$ code., with transversal measurement in the occupation basis. This corresponds to correcting for $\gamma_i$ errors. The measurement blocks correspond to measuring in the occupation number basis, followed by classical post processing. For $\bar{\gamma}_i$, we conjugate this circuit with $\mathcal{B}$ given in the main text.}
    \label{fig:steanegadget}
\end{figure}

The gadget has support over four code blocks, where the first code block corresponds to the data block, while the remaining three correspond to ancilla blocks. Similar to Steane EC, we need to choose an ancilla state so that we can find error syndromes transversally, without measuring the logical information. We encode the three ancilla blocks in the same code as the data block. Moreover, we choose the following additional stabilizer conditions
\begin{itemize}
    \item $\bar{X_2}$ is inside the stabilizer, hence the second logical block is in $\ket{\bar{+}}$.
    \item $\bar{X_3}\bar{X_4}$ and $\bar{Z_3}\bar{Z_4}$ are also in the stabilizer, so the last two blocks are in a \textit{logical Bell} pair $\frac{1}{2}(\ket{\overline{00}}+\ket{\overline{11}})$.
\end{itemize}
We can correct errors of $\gamma_i$ type using this circuit. Note that
\begin{itemize}
    \item Errors on the data block propagate to ancilla blocks three and four, and also act on data block, at the same location.
    \item Errors on ancilla block two propagate to errors on block two, three and four.
    \item Errors on ancilla block three propagate to errors on the data block, block two and block four.
    \item Errors on ancilla block four propagate to errors on the data block, block two and block three.
\end{itemize}
Since measuring the syndrome on a single block does not uniquely identify the location of the fault we need syndromes from all four blocks. 
To represent the codewords of Majorana codes, we need to first select a complete set of basis states. For $2n$ Majoranas, one has $n$ physical fermions, with one identifying the pair $\gamma,\bar{\gamma}$ with each mode. The occupation number basis $\ket{n_1,n_2,n_3,...n_n}$ is the analogous computational basis in this context, and the default measurement basis. The codewords of this code are:
\begin{align}
    \ket{\bar{k}} = \prod_{i} (I + g_i)N^{k}\overbrace{\ket{00\ldots }
    }^n
\end{align}
where $g_i$ are rows of matrix $S$. Upto phases, we can multiply them to find the following form:
\begin{align}
  \ket{\bar{k}} = \sum_i g_i N^k \ket{00\ldots 0}  \\
  \ket{\bar{k}} = \sum_{v\in C} \ket{\bar{k}+v}
\end{align}
Here, $\bar{k}$ labels the coset number. For the second code block, the states can be written as
\begin{align}
\ket{\bar{+}} = \sum_{v\in C} \ket{\bar{0}+v}  + \sum_{v\in C} \ket{\bar{1}+v}
\end{align}
while for the third and fourth code blocks the states are:
\begin{align}
 \ket{\overline{++}} = \sum_{v,w\in C} \ket{\bar{0}+v,\bar{0}+w}  + \sum_{v,w\in C} \ket{\bar{1}+v,\bar{1}
 +w}
\end{align}  

One can measure in the occupation number basis, and use the parity check matrix of the classical code $C$ to check for low weight $\gamma$ errors. In order to correct the errors, we need to apply a recovery operation. By doing classical decoding on each block, it gives out a recovery operation that would cancel the syndrome. If the syndrome in each block is labeled  $s_j,\{j=2,3,4\}$, then the syndrome for the data block is given as
\begin{align}\label{eq:boolean}
    s_d = s_2\wedge s_3 +s_2\wedge s_4 +s_3\wedge s_4+s_2\wedge s_3\wedge s_4 \mod 2
\end{align}
This tells us that to identify the data syndrome, we need to combine information from the remaining syndromes, since for instance, an error on the data block raises a flag on same syndrome bits of blocks three and four, but nothing on block two, and thus the above function will evaluate the syndrome vector on the data block, which can then be used to estimate the error vector on the data and apply the corresponding correction. Moreover, since the data and ancilla blocks have error patterns, one can distinguish if the error came from a data block or ancilla block. More details about the syndrome recovery are in the Appendix \ref{section: syndromedetail}.

For errors of type $\bar{\gamma_i}$, we can conjugate with the transversal transformation:
\begin{align}
    \mathcal{B} = \prod_{i=1}^{n}\braidtwo(i,i+n) = \prod_{i=1}^{n}\exp(-\frac{\pi}{4} \gamma_i \bar{\gamma_i})
\end{align}
For $\gamma_i$ errors, we prepare the second ancilla block in the logical $\ket{\bar{+}}$ block, and measure transversally in computational basis. Instead if we have $\bar{\gamma}_i$ errors, we prepare in $\mathcal{B}\ket{\bar{+}}$ and measure in $\mathcal{B^{\dagger}}\ket{\text{occupation basis}}$, and since $\braidtwo \gamma_i \braidtwo  = \bar{\gamma}_i$, this transformation is sufficient to find syndromes of $\bar{\gamma}_i$. 

\textit{Recovery operation: }A recovery operation as described before corresponds to an odd weight operation on the data register. To do so, we can also keep an ancilla register such that the quantum operation on the ancilla+system is an even weight operation and does not violate parity superselection. We can initialize an ancilla system with $n_r$ modes, such that they are initially in a global even state. We can also keep an extra classical bit that stores the parity of the ancilla register. This scheme corresponds to a parity bit error detecting code and can only detect an odd number of parity errors.  A recovery operation thus corresponds to acting with $\mathcal{R\circ A}(\rho_{ad})$ where $\mathcal{A}$ is defined as follows
\begin{align}
    \mathcal{A}(\rho_{ad}) = \sum_{j=1}^{n_r} (1-\eta_{j+1})\gamma_j^a \eta_{j-1} (\rho_{ad}) \sum_{j=1}^{n_r} (1-\eta_{j+1})\gamma_j^a \eta_{j-1} 
\end{align}
where $\eta_j = i\gamma_j^a \bar{\gamma_j}^a $ which basically removes a Majorana from the last mode. We can also update the parity bit when we do so. In case there is a parity flip on the ancilla, we can measure $\Ptot$ of the ancilla register and compare against the stored classical bit, if those are different, then we have incurred a parity flip and one can reset this register to the prior state. Note that throughout this recovery operation, the ancilla and data registers remain unentangled: in a correct stabilizer subspace, both the ancilla and the data are even and can be written in a tensor product fashion. In the event of an odd error, and thereafter recovery, the code state is returned to its previous correct subspace that corresponds to an even code, while the ancilla is in an odd parity subspace. 

\subsection{Logical measurement in Steane}
To measure logical operators of the data block, we need to prepare the ancilla block in a state such that the logical state of the data is copied to it and can be transversally measured. In qubits, to measure $\bar{Z}_d$, ancilla block is prepared in the logical $\bar{0}$ state, or an eigenstate of $\bar{Z}$ operator. A transversal $\mathrm{CNOT}$ then copies the state of the data block: if there is a logical bit flip on the data, then the state of the ancilla block is also flipped, and a logical $\bar{Z}$ measurement thus detects this flip. We can use a similar idea for the logical measurement here.

To measure $\bar{Z}$ on the data block, we need to copy the logical operator from the data block to the ancilla blocks. As mentioned above, 
\begin{align}
    \bar{X}_d \rightarrow \bar{X}_d\bar{X}_3 \bar{X}_4
\end{align}
By preparing ancilla blocks three and four in eigenstate of $\bar{Z}$ for each respective block, we can measure each block transversally, and perform classical error correction as described above to find the closest logical state. A similar idea is present in order to perform a logical $X$ measurement, except now the ancilla state is prepared in an eigenstate of $\bar{X}$ rather than $\bar{Z}$. 

This protocol is important in two contexts, first we can combine a non-fault tolerant encoding scheme as in \cite{MudassarEncoding} with this gadget, and by measuring the stabilizers of the code and the logicals, we can verify that we prepared the correct state fault tolerantly and in case we detect any faults, we can discard and start again. Secondly, we can insert this gadget into various error locations and measure the stabilizer of the code to flush out any errors in the system. 

\section{Transversal non Clifford gates}\label{section:nonClifford}
While Clifford gates and measurements are a useful set of gates since they correspond to stabilizer circuits, in order to do universal computing, one also needs non-Clifford resources. $T$ gate is one instance, that along with fermionic Clifford operations, constitutes universal fermionic computing \cite{BRAVYI2002210}. In most schemes performing magic state distillation, one requires a code that contains a logical $T$ gate. To this end, we show an example of a Majorana code that implements a transversal logical $T$ gate.

 In the qubit quantum Reed Muller code, the codewords are associated to the cosets of $C_2^{\perp}$ in $C_1$. Note that each element in $C_2^{\perp}$ satisfies the \textit{triorthogonality} condition, and in particular for $[[15,1,3]]$, each basis state in the $\ket{\bar{0}}$ has weight $0\mod8$ 
\begin{align}
\ket{\bar{0}} = \sum_{v\in C_2^{\perp}} \ket{v}
\end{align}
where $C_2^{\perp}$ is generated by $H_2$
\begin{align}
H_2=
 \begin{bmatrix}
1 & 1 & 1 & 1 & 1 & 1 & 1 & 1 & 0 & 0 & 0 & 0 & 0 & 0 & 0\\   
1 & 1 & 1 & 1 & 0 & 0 & 0 & 0 & 1 & 1 & 1 & 1 & 0 & 0 & 0\\ 
1 & 1 & 0 & 0 & 1 & 1 & 0 & 0 & 1 & 1 & 0 & 0 & 1 & 1 & 0\\ 
1 & 0 & 1 & 0 & 1 & 0 & 1 & 0 & 1 & 0 & 1 & 0 & 1 & 0 & 1
\end{bmatrix}   
\end{align}

As a result, applying a rotation by $R^{\otimes n}_{\pi/8}$, where $R_{\pi/8}=e^{-i\frac{\pi}{8} \sigma_z }$, gives an overall global phase of $e^{i \phi}$, where $\phi=-n\pi/8$. 

Analogously, consider the $T$ gate for Majoranas:
\begin{align}
R_{\pi/8} = e^{-\frac{\pi}{8} \gamma_i\bar{\gamma_i}} = e^{-i\frac{\pi}{8} (-i\gamma_i\bar{\gamma_i})}
\end{align}

We discussed before that in general, for Majorana CSS codes, $C_2^{\perp}$ is not necessarily contained in $C_1$. However, if one starts with a qubit CSS code such that $C_2^{\perp}\subseteq C_1$, and chooses the parity check matrices $H_\gamma = H_x$ and $H_{\gamma \bar{\gamma}}=H_z$, then the corresponding Majorana code would have the required property. In other words, $H_{\gamma}$ generates $C_2^{\perp}$. This allows us to write codewords, for instance, for the $[[15,1,3]]_f$ code:
\begin{align}
    \ket{\bar{0}} = \sum_{v\in C_2^{\perp}} \ket{v}\\
    \ket{\bar{1}} = \sum_{v\in C_2^{\perp}} \ket{1+v}
\end{align}
where $\ket{1}\in C_1\setminus C_2^{\perp}$ and corresponds to all ones vector. More precisely, $C_1$ corresponds to punctured $\mathcal{R}(1,4)$ (last index is deleted) and $C_2^{\perp}$ corresponds to the even subcode, with the all ones vector removed. 

From the same argument, the $\ket{\bar{0}}$ has weight $0\mod8$, and by applying fermionic $(R_{\pi/8})^n$ gives $\phi = -n\pi/8$. The $\ket{\bar{1}}$ state has weight $7\mod 8$.
Applying the $(R_{\pi/8})^n$ yields:

\begin{align}
\prod_{i}^{n} R_{\pi/8} \ket{\bar{1}}
&=e^{-i(n-7)\pi/8} e^{i 7\pi/8} \ket{\bar{1}}\\
&=e^{-i n\pi/8} e^{i7\pi/4} \ket{\bar{1}}\\
&=e^{-i n\pi/8} e^{-i\pi/4} \ket{\bar{1}}
\end{align}
When applied to a superposition of $\ket{\bar{0}}$ and $\ket{\bar{1}}$ states, this results in
\begin{align}
    \prod_{i}^n R_{\pi/8} (\ket{\bar{0}}+\ket{\bar{1}}) = e^{i\pi/8}\ket{\bar{0}}+e^{-i\pi/8}\ket{\bar{1}}
\end{align}
which just shows that transversal $(R_{\pi/8})^n$ is indeed a logical gate of this code that corresponds to a logical $R_{-\pi/8}$ gate. Note that the logical operators of this code are odd, which implies that the above state can only be described with respect to a reference. 

Generically, triorthogonal codes are used to define codes with a transversal $T$ gate \cite{Haah_Low-overhead}. In the construction of such codes, the odd weight rows of a triorthogonal matrix are used to define logical qubits, and the odd weight is necessary for the correct anticommutation relations $\{\bar{X_i},\bar{Z_i}\}=0$. Here, we are promoting the binary matrix to correspond to Majorana CSS codes via Lemma ~\ref{lem_csstomaj}, so we have a similar constraint to have odd weight binary strings for the logical operators, since $\bar{X}$ and $\bar{Z}$ are chosen to have support on $\gamma$ part and $\bar{\gamma}$ respectively.  

This code is also useful to implement universal fault-tolerant gates from using code-switching \cite{Codeswitching1,Codeswitching2}, since both $[[7,1,3]]$ and $[[15,1,3]]$ correspond to Majorana Steane code $[[7,1,3]]_f$ and Majorana Reed Muller code$[[15,1,3]]_f$ , where Steane code implements Clifford gates and Reed Muller implements $T$ gate. Switching can be implemented by measuring stabilizers of the respective code, and applying gauge operators if a correction is required. Since gauge operators correspond to faces of the code and are even weight, they can be measured without the need of a reference. While we do not show this, it is likely that higher distance 3D color codes can also be converted to Majorana Reed Muller codes, since such codes are triorthogonal and automatically satisfy the self-orthogonality property required for Majorana codes. 

\subsection{Magic state distillation using this code}
We describe the procedure to implement magic state distillation using the $[[15,1,3]]_f$ fermion code. A magic state distillation procedure starts with an input of noisy magic states. Note that a fermion magic state $\ket{T_f}$:
\begin{align}
    \ket{T_f}=R_{\pi/8}\ket{+} = \ket{0}+ e^{i\pi/4}\ket{1}
\end{align}
corresponds to a coherent superposition of even and odd states, and thus we need a reference state. Therefore, we have
\begin{align}
    \ket{T_f}_{\text{inv}} = R_{\pi/8}(\ket{0}_r\ket{+})
\end{align}
instead. For a non-fault-tolerant scenario, we can start with 15 noisy T states. 
\begin{align}
    \ket{T_f}_{\text{inv}} = R_{\pi/8}^{\otimes 15}\ket{0}_r \ket{+}^{\otimes 15} 
\end{align}
which can be easily extended to be fault tolerant by encoding these in an outer code $Q$. Note that in order to use the fermionic $[[15,1,3]]_f$, this outer code must be odd as well, so that the logicals are fermions. Thus, both the reference and the magic states are encoded in an odd fermion code, $Q$. One can then do FTEC to make this encoding procedure fault tolerant.

Next, one can measure the projectors of the inner [[15,1,3]] code $\Pi_{15}$. 
\begin{align}
    \Pi_{15} \overline{R}_{\pi/8}^{\otimes 15} \ket{\bar{+}}^{\otimes 15}
\end{align}
The projectors commute with $R_{\pi/8}^{\otimes 15}$ since it is transversal, and project $\ket{\bar{+}}^{\otimes 15}$ to $\overline{\ket{\bar{+}}}$ since it is a Majorana CSS code (the only surviving codewords are the ones dual to $C_2$). If there are any weight-1 or weight-2 errors on the code, they can be detected and the code discarded. Else, we can decode the inner code to return an encoded $\ket{T}_f$ with better fidelity.  
As alluded to before, one can also use a qubit [[15,1,3]] code, but the overhead would be bigger by a factor of two.

\section{Majorana CSS codes}\label{section: CSScodes}

In \cite{bravyi_majorana_2010}, a qubit weakly self-dual CSS code is converted to a Majorana CSS code i.e. one code specifies the support of $\gamma_i$, while the other specifies the support of $\bar{\gamma_i}$. In such a construction, since qubit codes have to satisfy the constraint that $H_xH_z^{T}=0\mod 2$, codes for both $\gamma$ and $\bar{\gamma}$ are same. However, for Majorana codes this is actually not required in general, so we extend this construction here. 
Consider two classical codes $C_1$, $C_2$, and their generator matrices correspond to $G_1$ and $G_2$ respectively. The Majorana CSS construction is given as
\begin{align}
    S = \begin{bmatrix}
        G_1 & 0\\
        0 & G_2
    \end{bmatrix},
\end{align}
For two binary vectors $\textbf{x}$ and $\textbf{z}$ drawn from $G_1$ and $G_2$ respectively, we have
\begin{align}
\textbf{x} \Lambda_f \textbf{z}^T = 0.
\end{align}

The evaluation of the above gives
\begin{align}
    \sum_{i=1}^N x_i \sum_{j=1}^N z_j = 0. 
    \label{eq:csscond1}
\end{align}
This implies that every vector $\textbf{x}$, \textbf{z} must have even weight.
Elements of stabilizers from the same classical code also should commute, therefore we require that two vectors $\textbf{x}_1$ and $\textbf{x}_2$ from the same classical code fulfill
\begin{align}
\textbf{x}_1 \Lambda_f \textbf{x}_2^T = 0,
 \end{align}
 which gives
 \begin{align}
     \sum_{i=1,j\neq i}^N x_{1i} x_{2j}=0.
     \label{eq:CSS2}
 \end{align}
  Eq.\eqref{eq:CSS2}, combined with Eq.\eqref{eq:csscond1}, which states that each row is even weight, which gives
 \begin{align}
     \sum_{i=1} x_{1i} x_{2i} = 0,
\end{align}
which is
\begin{align}
    \textbf{x}_1\cdot \textbf{x}_2=0.
\end{align}
This implies that both codes $C_1$ and $C_2$ should be weakly self dual i.e. $C_1\subseteq C_1^{\perp}, C_2 \subseteq C_2^{\perp}$. 
One can thus use any two classical $[n,k_1,d_1]$ and $[n,k_2,d_2]$ self orthogonal codes to construct a Majorana CSS code, $[[n,n-(k_1+k_2),\min(d_1,d_2)]]_f$. In general, for degenerate Majorana codes, $d\geq \min(d_1,d_2) $. Degeneracy is possible for instance for high distance Majorana color codes, where each stabilizer has weight six, but distance can be varied macroscopically and can be scaled as $l$. For such a case, product of two low weight detectable errors $E$ and $E'$ can be inside $S$.   

The Majorana LDPC code obtained below is constructed through Lemma~\ref{lem_csstomaj}, however in general, one can expect to have an LDPC code construction that can be a Majorana CSS code but not a qubit CSS code. Another important observation in this construction is that if $\Ptot\in S$, then one cannot form logical fermions since in a CSS code, logical operators can be either of type $\gamma$ or $\bar{\gamma}$ but not both, and in order for all of them to mutually anti-commute, they must be have operators of type $\gamma$ and $\bar{\gamma}$.  
\section{Majorana quantum LDPC code}\label{section: qldpc code}
For a Majorana code, a high-rate encoding would be beneficial when encoding multiple logical qubits or fermions while having a high distance. Note that we have the following constraint on the stabilizer of Majorana codes
\begin{align}
    SS^{T} = 0.
\end{align}
This precludes quantum LDPC codes generated through well known product constructions, such as the hypergraph product. However, as described in the previous section, one can choose a classical self-orthogonal code, and that would correspond to a Majorana LDPC code. 

One such construction is given in \cite{couvreur2013constructionquantumldpccodes} which gives a classical weakly self-dual code, and can be converted to a quantum weakly self dual code via \cite{Steane96}. One can then use Lemma ~\ref{lem_csstomaj} to convert to a Majorana code. We now summarize the construction and the key results.

The construction involves the use of \textbf{Cayley graphs}. Let there be a group $G$ and a subset of that group $S$. A Cayley graph $\mathscr{C}(G,S)$ is defined as a graph whose vertex set is $G$, and there exists an edge between two vertices $g,g'$ iff there exists $s\in S$ such that $g=sg'$. 

If one chooses $S$ such that $S$ generates $G$, and $S$ has the following properties:
\begin{itemize}
    \item The number of generators in $S$ are even.
    \item For all $g\in G$, there are an even number of distinct expressions of $g$ s.t. $g=st^{-1}$, where $s,t\in S$.
\end{itemize}
Then the adjacency matrix $\mathscr{R}$ of the Cayley graph has the following properties:
\begin{itemize}
    \item Each row has even weight, since each element in $G$ is connected to an even number of vertices.
    \item Each row is orthogonal to any other row.  Consider two elements $g_1,g_2\in G$. In the adjacency matrix, they have a 1 in common iff $tg_1=sg_2$ i.e. they are connected to the same element. This is equivalent to $g_1g_2^{-1} = st^{-1}$. Since $g_1g_2^{-1}\in G$, and there are an even number of expressions for any group element, the overlap is even.

\end{itemize}
The adjacency matrix $\mathscr{R}$ now has the property of self orthogonality. This generates a classical code $C$ such that:
\begin{align}
    C \subseteq C^{\perp}
\end{align}.

To generate a Majorana code, we can either:
\begin{itemize}
    \item Promote the generator matrix of this code to the stabilizer matrix of the Majorana code. By construction, this will have the following parameters:
    if the classical code $C$ has parameters $(2n,k,d)$, then the Majorana code will have $[[n,n-k,d^{\perp}]]$, where $d^{\perp}$ is the dual distance given by the weight of smallest element in $C^{\perp}$. 
    \item Let $H_1=\mathscr{R}$ and $H_2=\mathscr{R}$. Now, the parameters, given a classical code $[n,k,d]$(where $\mathscr{R}$ is generator matrix for the classical code) are $[[n,n-2k,D^*]]$, where $D^*\geq d^{\perp}$. 
\end{itemize}

For our purpose, we use the second method, which generates a Majorana CSS code.

In general, $k$ and $d$ of the classical code are given as:
\begin{align}
    k = \dim(C)=\rank(\mathscr{R})\\
    d = \min(\text{wt}(C\setminus C^{\perp})) = \min(\text{wt}(\Ima{(\mathscr{R}})\setminus \ker({\mathscr{R}})
    )
    )
\end{align}
while for the Majorana code, the parameters are:
\begin{align}
    K = \dim(C)=N-2\rank(\mathscr{R})\\
    D^{*}=\min(\text{wt}(C^{\perp}\setminus C)) = \min(\text{wt}(\ker{(\mathscr{R}})\setminus \Ima({\mathscr{R}})))
\end{align}
Note that $d^{\perp}$ is too loose of a lower bound for distance of the code
since the rows of generator matrix have low weight by design, therefore, the distance is computed directly. 
Finding these is non-trivial in general, but \cite{couvreur2013constructionquantumldpccodes} computes these for the case of the repetition code. The construction is as follows:
\begin{itemize}
    \item Consider the parity check matrix $H_n$ for a repetition code $[n+1,1,n+1]$, with size $n\times (n+1)$. The columns of this matrix correspond to $S$, while the group $G=\mathbb{F}_2^{n}$ i.e. all bitstrings of length $n$. This gives a vertex set of size $2^n$.  Note that the the size of $S$ must be even,  so a non-trivial LDPC code only exists for repetition codes with odd $n$. 
    \item Find the adjacency matrix corresponding to the above, represented as $\mathscr{R}(S_n,\mathbb{F}_2^n)$. 
\end{itemize}
For odd $n$, and $n\geq 3$,  the rank is given as:
\begin{align}
    \rank(\mathscr{R}) &= N-\ker(\mathscr{R})\\
    &=2^n - (2^{n-1}+2^{\frac{n-1}{2}})
\end{align}
Finally, the Majorana code parameters are:
\begin{align}
    K = N-2(\rank(\mathscr{R}))=2^{\frac{n+1}{2}}\\
    D^{*} = \min(\wt(\ker(\mathscr{R})\setminus \Ima{\mathscr{R}})) = 2^{\frac{n-1}{2}}
\end{align}
 Thus we have found a $[N,\sqrt{2N},\sqrt{N/2}]$ code, which implies it is high rate and high distance, in fact, the distance is comparable to a hypergraph product (HGP) construction \cite{Zemor}, but $k$ is $O(\sqrt{N}$ as opposed to $O(N)$ for HGP codes. Moreover, check weights also grow as $O(\log(N))$, while for HGP codes, the checks are normally asssumed to be constant weight.  

One important question is whether this code encodes logical qubits or logical Majoranas. One method to check would be if $\Ptot\in S$. If that is the case, then all logical operators can only be \textbf{even} weight. We show that this is true in the following theorem. 
\begin{theorem}
    A Majorana code generated from the construction in Ref.~\cite{couvreur2013constructionquantumldpccodes} using the repetition code as the underlying classical code has only even weight logical operators. 
\end{theorem}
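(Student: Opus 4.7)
The plan is to establish $\Ptot \in S$; once this holds, every logical operator, being an element of $N(S)$, must commute with $\Ptot$, and by the commutation rule $|\textbf{x}||\textbf{z}| + \textbf{x}\cdot\textbf{z} \bmod 2$ recalled earlier in the paper, a Majorana string commutes with $\Ptot$ iff its Hamming weight is even. Because the stabilizer matrix is block-diagonal with $\mathscr{R}$ on both the $\gamma$ and $\bar{\gamma}$ blocks, the claim $\Ptot \in S$ reduces to showing that the all-ones vector $\mathbf{1} \in \mathbb{F}_2^{2^n}$ lies in the row span of $\mathscr{R}$, i.e., that there exists $w$ with $\mathscr{R}w = \mathbf{1}$.

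I would construct such a $w$ explicitly by passing to the group algebra $\mathbb{F}_2[\mathbb{F}_2^n]$, which is isomorphic to the truncated polynomial ring $R := \mathbb{F}_2[y_1,\ldots,y_n]/(y_1^2,\ldots,y_n^2)$ via the substitution $x_i \mapsto 1+y_i$. Under this isomorphism the action of $\mathscr{R}$ becomes multiplication by $f := \sum_{s \in S} x^s$ in $R$, and the all-ones coefficient vector corresponds to $\sum_g x^g = \prod_{i=1}^n (1+x_i)$, which in $R$ becomes $\prod_{i=1}^n y_i$ in characteristic two.

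A short calculation, using that $n$ is odd so $|S|=n+1$ vanishes mod~$2$ and that the linear contributions from the $n-1$ middle columns telescope and cancel the endpoint contributions from $e_1$ and $e_n$, collapses $f$ to the ladder
\[
f \;=\; y_1 y_2 + y_2 y_3 + \cdots + y_{n-1} y_n.
\]
Setting $w := y_3 y_4 \cdots y_n$, for every $j \geq 2$ the index $j+1$ lies in $\{3,\ldots,n\}$, which is the support of $w$, so $y_j y_{j+1}\cdot w$ contains $y_{j+1}^2=0$ and vanishes; only the $j=1$ term survives, giving $f\cdot w = y_1 y_2 \cdots y_n$. Translating back through the isomorphism gives $\mathscr{R}\tilde w = \mathbf{1}$, and hence $\Ptot \in S$.

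The main obstacle is the characteristic-two bookkeeping needed to collapse $f$ into the clean ladder form; once that is done, the choice $w = y_3 \cdots y_n$ is essentially forced by the condition that every term beyond $j=1$ be annihilated. A cleaner but less constructive alternative is to observe that $R$ is a local ring with nilpotent maximal ideal $\mathfrak{m}=(y_1,\ldots,y_n)$, so no unit of $R$ can annihilate the nonzero element $f$; hence $\mathrm{Ann}_R(f) \subseteq \mathfrak{m}$, and combined with the symmetry $\mathscr{R}^T = \mathscr{R}$ (every element of $\mathbb{F}_2^n$ is self-inverse) this directly gives $\mathbf{1} \in \Ima(\mathscr{R})$ without exhibiting a preimage.
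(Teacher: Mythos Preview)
Your proof is correct and takes a genuinely different route from the paper's. Both arguments reduce to showing that the all-ones vector $\mathbf{1}$ lies in the row span of $\mathscr{R}$, but the paper proceeds by a direct combinatorial analysis of the recursive block decomposition
\[
A_{n+1}=\begin{bmatrix} A_n+J_n & I_n+J_n\\ I_n+J_n & A_n+J_n\end{bmatrix},
\]
tracking column weights and summing an explicit subset of rows to exhibit $\mathbf{1}$. You instead pass to the group algebra $\mathbb{F}_2[\mathbb{F}_2^n]\cong R=\mathbb{F}_2[y_1,\dots,y_n]/(y_i^2)$, identify $\mathscr{R}$ with multiplication by $f=\sum_{s\in S}x^s$, collapse $f$ to the ladder $\sum_i y_iy_{i+1}$, and produce the socle element $y_1\cdots y_n$ (which is $\mathbf{1}$ in coefficient space) as $f\cdot(y_3\cdots y_n)$. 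The algebraic route is cleaner and strictly more general: your local-ring alternative ($\mathrm{Ann}_R(f)\subseteq\mathfrak m$ for any nonzero $f$, hence $\ker\mathscr{R}$ lies in the even-weight hyperplane, hence $\mathbf{1}\in\ker(\mathscr{R})^\perp=\Ima\mathscr{R}$) works for \emph{any} nonempty $S\subseteq\mathbb{F}_2^n$, not just the repetition-code columns. The paper's matrix argument, by contrast, is tied to the specific recursive structure coming from the repetition code.

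Two small points worth tightening. First, you are implicitly using the ``chain'' parity-check matrix with columns $e_1,\,e_1{+}e_2,\dots,e_{n-1}{+}e_n,\,e_n$, whereas the paper fixes $H_n=[e_1,\dots,e_n,\,e_1{\oplus}\cdots{\oplus}e_n]$. These are row-equivalent, so the two Cayley graphs differ by the group automorphism $g\mapsto Mg$ and the adjacency matrices are permutation-conjugate; since $\mathbf{1}$ is permutation-invariant the conclusion transfers, but you should say so. (Your local-ring variant is insensitive to this choice anyway.) Second, your opening line silently equates ``$\mathbf{1}$ in the row span'' with ``$\exists w:\mathscr{R}w=\mathbf{1}$''; this uses $\mathscr{R}^T=\mathscr{R}$, which you only invoke explicitly in the alternative argument --- state it up front.
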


\begin{proof}
In order to test whether a code has even or odd logical operators, we need to test whether the total parity is contained inside the stabilizer or not. We can look at two different cases, when $n$ is odd and when $n$ is even.

Once again, we have a Cayley graph $C(G,S)$, and the stabilizer matrix of our desired LDPC code corresponds to its adjacency matrix. For the repetition code, the parity check matrix has the form
\begin{align}
    H_n =[e_1,e_2, \ldots ,e_n, e_1\oplus e_2\oplus\ldots e_n], 
\end{align}
where checks correspond to the rows of $H_n$. $e_i$ are the column vectors of length $n$ and correspond to the canonical basis vectors of $F_2^n$. The columns of $H_n$ form the subset $S$ while $G=\mathbb{F}_2^n$. Note that for $n+1$, $H_{n+1}$ is the same except with an additional $e_{n+1}$ vector added to above.The adjacency matrix is defined $A_{n+1}\equiv \mathscr{R}(F_2^n,S_n)$ and in this case, it has the following nice recursive form because of the structure of the parity checks of repetition code
\begin{align}
A_{n+1}=
\begin{bmatrix}
A_{n}+J_{n} & I_{n}+J_{n}
\\
I_{n}+J_{n} & A_{n}+J_{n}
\end{bmatrix}
\end{align}
where $J_{n}$ is the binary antidiagonal matrix,
\begin{align}
    {[J_n]}_{(i,2^{n-1}-i)} = 1 \quad \text{for } i=\{0,1\ldots 2^{n-1}\}
\end{align}
and $I_{n}$ is the identity matrix, each of dimension $2^{n-1}$.  To see why one gets the above form, note the following:

     Each element \textbf{x} in $\mathbb{F}_2^n$ can be mapped to an integer \textbf{p} in range $\{0,\ldots 2^n-1\}$, corresponding to its lexicographic ordering where \textbf{x} is the binary representation of \textbf{p}. For instance, 
    \begin{center}
    \begin{tabular}{|c|c|}
    \hline
        \textbf{x} & \textbf{p}\\
    \hline   
    $\overbrace{00\ldots 0}^{n}$ & $0$\\
    $00\ldots 1$ & $1$\\
    \vdots & \vdots\\
    $11\ldots 1$ & $2^{n}-1$\\
    \hline
    \end{tabular}
    \end{center}
    $\textbf{p}$ indexes the position in the adjacency matrix $A_{n+1}$.
    
    Note that the top left block of $A_{n+1}$ corresponds to a binary antidiagonal matrix . This is because $\exists x\in \mathbb{F}_2^n$ such that we can map to another element $y \in \mathbb{F}_2^n$ by multiplying with $s\in S_n$ where $s$ is the all ones vector. This basically swaps 0 and 1 in $x$. What this means is that:
    \begin{align}
        [A_{n+1}]_{(p,2^n-1-p)} = 1
    \end{align}
    Hence we have $J_n$ in the top right half and bottom left half of $A_{n+1}$.
     Consider an element $x\in \mathbb{F_2^n}$ whose $n^{th}$ position from the right is 0. This corresponds to a $\textbf{p}$ in range $\{0,\ldots 2^{n-1}-1\}$. Adding an element $e_n$ maps $p$ to $p+2^n$, which gives $I$ in top right block, and symmetrically in the bottom left block.   
     
    Lastly,  we need to remove the contribution of $e_1 +e_2+\ldots e_n$ from $A_n$ so we add $J_n$ to the top left matrix, which as described above, corresponds to adding an all one vector.

For each column in $A_{n+1}$, the number of ones are given by:
\begin{align}
(\text{\# of ones per col } A_{n+1} )= (\text{\# of ones per col } A_{n}) +1 
\end{align}
since for each column, the identity matrix contributes a single one, while the ones in $J_n$ matrices cancel each other.    
In addition, $A_4$ has four ones in each column, so in general, $A_{n+1}$ has $n+1$ ones in each column.

For odd $n+1$, showing that $\Ptot$ is inside the group is trivial since each column has an odd number of ones, so adding all the rows $\mod{2}$ in general will give $\Ptot$. Odd $n+1$ also has trivial logical dimension. 

For even $n+1$, let's look at the top half of the matrix
\begin{align}
    \begin{bmatrix}
        A_{n}+J_{n} & I_{n} +J_{n}
    \end{bmatrix}
\end{align}
The size of the above matrix is $2^{n-1}\times 2^{n}$, hence each submatrix is of size $2^{n-1}$. Represent the left submatrix by $L$ and the right submatrix by $R$. 

Note now that the sum of the top half of $R$ corresponds to the all ones vector of size $2^{n-1}$. Thus the sum of the first $2^{n-2}$ rows of $[0|R]$ give the vector

\begin{align}
\begin{bmatrix}
    00\cdots 0 |1\cdots 1
\end{bmatrix}
\end{align}
of length $2^{n}$, and the ones part is $2^{n-1}$.
Let us now look at the top half of the left submatrix $L$. 

\begin{align}
    L = \begin{bmatrix}
        A_{n}+J_{n}
    \end{bmatrix}
    \label{eq:parityright}
\end{align}
The sum of the all rows of top half (or the first $2^{n-2}$ rows) of $J_{n}$ gives 
\begin{align}
\begin{bmatrix}
00\cdots 0 , 11\cdots1
\label{eq:parityleft1}
\end{bmatrix}  
\end{align}
of length $2^{n-1}$, where the ones part is of length $2^{n-2}$.  The comma separation indicates that we are looking at only the left submatrix. Let us look at the contribution from the top half of $A_{n}$. Note that top half of $A_{n}$ is:
\begin{align}
\begin{bmatrix}
  A_{n-1}+J_{n-1} | I_{n-1}+J_{n-1}   
\end{bmatrix}
\end{align}
The right part of the above matrix is all zeros. Since $A_{n}$ corresponds to the odd $n+1$ case, from the previous argument we can conclude the sum of bottom left half of $A_{n}$ gives 0, the sum of left columns gives 
\begin{align}
\begin{bmatrix}
    11\ldots 1 , 00\ldots0
    \label{eq:parityleft2}
\end{bmatrix}   
\end{align}
of length $2^{n-1}$, where the ones vector is of size $2^{n-2}$. Combining \ref{eq:parityleft2} and \ref{eq:parityleft1} gives all ones in the right submatrix. Combining this with \ref{eq:parityright} gives the total parity $\Ptot$ inside the stabilizer, and concludes the proof.
 
\end{proof}
To look at an example for the sake of illustration, let us look at the $n+1=4$ case. The adjacency matrix is given as:
\begin{align}
A_4=
\begin{bmatrix}
0 & 1 & 1 & 0 & 1 & 0 & 0 & 1\\ 1& 0& 0& 1& 0& 1& 1& 0\\ 
1& 0& 0& 1& 0& 
  1& 1& 0\\ 
  0& 1& 1& 0& 1& 0& 0& 1\\
  1& 0& 0& 1& 0& 1& 1& 0\\
  0& 1& 1& 0& 1& 0& 0& 1\\
   0& 1& 1& 0& 1& 0& 0& 1 \\ 
   1& 0& 0& 1& 0& 1& 1& 
  0
\end{bmatrix}
\end{align}
The parameters for this code are: $N=8,K=2^2=4,D^* =2 $.  
 Also each row has an even overlap with any other row, which gives it the self orthogonal property. Also note that the rank of this matrix is two. We can use the CSS construction for Majoranas to obtain a Majorana CSS code with the following operators
\begin{align}
S_1 = \gamma_2 \gamma_3 \gamma_5 \gamma_8\\
S_2 = \gamma_1 \gamma_4 \gamma_6 \gamma_7\\
S_3 = \bar{\gamma}_2 \bar{\gamma}_3 \bar{\gamma}_5 \bar{\gamma}_8\\
S_4 =\bar{\gamma}_1 \bar{\gamma}_4 \bar{\gamma}_6 \bar{\gamma}_7
\end{align}
There are four logical operators, which can be found by computing the kernel of the stabilizer matrix:
\begin{align}
\bar{X_1} = \gamma_2 \gamma_3 \qquad
\bar{Z_1} = \gamma_2 \gamma_5\\
\bar{X_2} = \gamma_1 \gamma_4\qquad
\bar{Z_2} = \gamma_1 \gamma_6\\
\bar{X_3} = \bar{\gamma}_2 \bar{\gamma}_3 \qquad
\bar{Z_3} = \bar{\gamma}_2 \bar{\gamma}_5 \\
\bar{X_4} = \bar{\gamma}_1 \bar{\gamma}_4 \qquad
\bar{Z_4} = \bar{\gamma}_1 \bar{\gamma}_6 
\end{align}
This gives an $[[8,4,2]]_f$ code, with eight physical fermions. The above construction gives logical qubits, but one can also find logical fermions by choosing an anti-commuting basis, although such logicals may not be of a CSS type. We note that there has been some subsequent work on Majorana LDPC codes that encode logical fermions \cite{xu2025fermiontofermionlowdensityparitycheckcodes}. 
\section{Discussion and conclusion}
In this paper, we have given a comprehensive framework for fault tolerant gadgets in architectures with fermions as physical degrees of freedom, for both Clifford and non-Clifford gates, as well as measurements. We have demonstrated important differences between even and odd fermionic codes, and used a quantum reference system to enact gates in odd codes. We have also given a construction of Majorana CSS codes and Majorana LDPC codes that encode logical qubits.

Our framework opens up new questions. In particular, with an appropriate decoder, we argue that our fault-tolerant gadgets would enable estimation of thresholds under various noise models for different quantum codes, which would be interesting to explore. Moreover, we have introduced LDPC codes that encode logical qubits, and it would also be interesting to explore frameworks that encode logical fermions.  Lastly, for topological computing on Majorana devices, it would be beneficial to have a unitary implementation of the $\braid$ gate that we have used extensively. We leave these questions for future work. 
\section{Acknowledgements}
M.M acknowledges useful discussions with Yifan Hong, Riley Chien and Chris Fechisin. Both M.M. and D.G. are supported under RQS QLCI grant OMA-2120757. A.S. acknowledges support by the U.S.~Department of Energy, Office of Science, National Quantum Information Science Research Centers, Quantum Systems Accelerator (QSA).

\appendix
\section{Parity superselection as a constraint on operators}\label{section: parityconstraintunitary}
We explained in the main text why parity superselection obstructs certain operations. Here we give a concrete form for why unitaries (in the absence of reference frames) can only be physically realizable if they commute with the total parity.

Consider the density operator, which in general can take the form 
\begin{align}
    \rho = \rho_e \oplus \rho_o
\end{align}
Assume that the starting state is $\rho_{e}$ , suppose we perform a parity switching unitary $U$ such that we have $\rho_{o}$. We can represent a general unitary by the following form
\begin{align}
    U=\left[\begin{array}{c|c}
      U_{ee} & U_{eo}\\
      \hline
      U_{oe} & U_{oo}
    \end{array}
    \right]
\end{align}
Conjugating $\rho = \rho_{e}$, with $U$, it is clear that one gets $\rho_o$ corresponds  iff $U_{eo}$ is non-zero. However, parity SSR only allows $U$ to have a block diagonal structure, so we arrive at a contradiction. This means that we cannot implement parity switching operations unitarily. 

\section{Details on $\braid$ implementation}\label{AppendixB}
A Braiding transformation on a system of Ising anyons can be written in terms of $F$ and $R$ matrices acting on the Ising fusion space, and in particular:
\begin{align}
    \braidtwo = F R^2 F
\end{align}
However, for the $\braid$ case, it is likely that this corresponds to a fusion scenario, since $F$ and $R$ only correspond to basis transformations and do not change the number of outcomes. The simplest example of fusion corresponds to measurement, or projection operators, which are depicted as four mode projectors. In the aforementioned scheme, we showed that a $\braid$ can be written in terms of four mode projectors, while using an ancilla mode. One can ask if it can be simplified to just two mode projectors. We show that this is not possible, as follows.

Consider the starting state of the stabilizer group to be 
\begin{align}
    S = \langle M_1,M_2,\ldots \rangle
\end{align}
For the sake of simplicity, these can be taken to be weight-two operators. The goal of a $\braid$ is to increase the weight of some stabilizer upon conjugation such that for any choice of generators, there is at least one element with weight four. We can prove that two-mode measurements are not sufficient to do this. Assume that one measures an observable $N$, which anti-commutes with two generators (note that for the above choice of generators, it can anti-commute with at most two). Then the updated stabilizer is:
\begin{align}
    S= \langle M_1M_2,N,\ldots\rangle 
\end{align}
At this stage, multiplying $M_1M_2N$ gives an operator of weight two
\begin{align}
    |M_1M_2N| = |M_1M_2|+|N|-2|M_1M_2 \cap N |\\
    =|M_1|+|M_2|+|N|-4
    =|M_1|+|M_2|-2
\end{align}
So this choice has still not changed the distribution of generators to the correct form. 
We can pick another generator, e.g. $N'$ that anti-commutes with $N$, but not with $M_1M_2$, so as to remove $N$ from the stabilizer as the next measurement, however this will also overlap on an even number of sites with $M_1M_2$, so the distribution does not change. 

To see that $\braid$ operators generate the logical $\braid$ group, we can show that the generators satisfy the Yang-Baxter identity:
\begin{align}
    \sigma_i\sigma_{i+1}\sigma_i = \sigma_{i+1}\sigma_{i}\sigma_{i+1}
\end{align}
The above identity is satisfied for the following choice of generators:
\begin{align}
    \sigma_i = \braid(1,2,3,4)\\
    \sigma_{i+1} = \braid(2,3,4,5)\\
\end{align}
This is easy to check.
\begin{align}
\braid(1,2,3,4) = \frac{1}{\sqrt{2}}(1+i\gamma_1 \gamma_2 \gamma_3 \gamma_4) = \frac{1}{\sqrt{2}}(1+\Gamma_1 \Gamma_2)\\
\braid(2,3,4,5) = \frac{1}{\sqrt{2}}(1+i\gamma_2 \gamma_3 \gamma_4 \gamma_5) = \frac{1}{\sqrt{2}}(1+\Gamma_2 \Gamma_3)
\end{align}
where $\Gamma_1 = \gamma_1$, $\Gamma_2 = i \gamma_2 \gamma_3 \gamma_4$, $\Gamma_3 = \gamma_5$

\begin{align}
    \sigma_i\sigma_{i+1}\sigma_i= \frac{1}{2\sqrt{2}}(1+\Gamma_1\Gamma_2)(1+\Gamma_2\Gamma_3)(1+\Gamma_1\Gamma_2)
\end{align}
Define $\Gamma_1\Gamma_2 \equiv a_1$, $\Gamma_2\Gamma_3\equiv a_2$
\begin{align}
    (R\otimes I)(I\otimes R)(R\otimes I)=(1+a_1)(1+a_2)(1+a_1)\\
    (1+a_1+a_2+a_1a_2+a_1+a_1^2+a_2a_1+a_1a_2a_1)
\end{align}
It can be verified that 
\begin{align}
    a_1^2 = a_2^2=-1, \{a_1,a_2\}=0
\end{align}
so the expression above gives
\begin{align}
(R\otimes I)(I\otimes R)(R\otimes I) = 2a_1+2a_2
\end{align}
which is symmetric under exchange of $a_1$ and $a_2$, showing that the Yang Baxter identity is satisfied. 
For the braid group, we have a set of $n$ linearly independent Majorana operators and conjugation by a braid operator corresponds to a swap between $n$ Majorana basis vectors. Similarly, $\braid$ performs swaps between $2k$ logical Majorana operators $\Gamma_i$, which are defined as follows

\begin{align}
    \Gamma_{2i+1} = \gamma_{4i+1} \text{ for } i=\{0,1,\ldots k-1\} \\
    \Gamma_{2i+2} = i \gamma_{4i+2}\gamma_{4i+3}\gamma_{4i+4}
\end{align}
$\braid$ thus acts on logical Majorana operators as
\begin{align}
    \braid^1 (\Gamma_1)(\braid^1)^{\dagger} = \Gamma_2\\
    \braid^1 (\Gamma_2)(\braid^1)^{\dagger} = -\Gamma_1
\end{align}
And correspondingly
\begin{align}
\Gamma_1 \xleftrightarrow[\braid^1]{} \Gamma_2 \xleftrightarrow[\braid^2]{} \Gamma_3 \xleftrightarrow[\braid^3]{} \ldots \Gamma_{k-1}\xleftrightarrow[\braid^k]{} \Gamma_k
\end{align}
modulo phases, where $\braid$ gates are defined
\begin{align}
    \braid^{2i}= \braid(4i-2,4i-1,4i,4i+1)\\
    \braid^{2i-1}=\braid(4i-3,4i-2,4i-1,4i)
\end{align}
\section{Decoding even and odd codes and logical Jordan Wigner transformation}
It is argued in \cite{MudassarEncoding} that when $\Ptot\in S$, one needs an ancilla to find a decoding map. We justify this further for even codes and show that they can only be decoded partially.

Consider an even code, and one requires a decoding map $\mathcal{D}$ such that:
\begin{align}
    \mathcal{D}:S_{\text{enc}}\rightarrow i\gamma_i\bar{\gamma}_i \quad i=\{0,\ldots n-k\}
\end{align}
However, if such a map exists, then $\Ptot\notin S$, and one will have at least one logical operator which is odd weight. But since a decoding map can only be constructed from even weight Cliffords, it is not possible to change the parity of logical operators. This means that either the logical operators have support outside of data qubits, or that one can only partially decode such a code such that the stabilizer group after the partial decoding still consist of $\Ptot$.

Consider now an odd code, which has been decoded completely and the stabilizer elements only have support over the first $2r$ modes. Since the logical operators must commute with the stabilizer, they can correspond to even or trivial support on the stabilizer modes. Suppose I choose a basis of logical operators (here I number the $2r+1$ as 1 and so on) such that the first mode is an odd operator and thus corresponds to a logical fermion, while the rest are even weight but correspond to logical fermions. We show how a  \textbf{logical} Jordan Wigner transformation on the remaining modes gives logical qubits.
\begin{align}
    \Gamma_1  = \gamma_1\bar{\gamma}_1 \quad \bar{\Gamma}_1 = \gamma_1\\
    \Gamma_2 = \gamma_2 \gamma_3\quad \bar{\Gamma}_2 = \gamma_2 \bar{\gamma}_3\\
    \vdots\\
    \Gamma_k = \gamma_2\bar{\gamma_k}\quad \bar{\Gamma}_k = \gamma_2 \gamma_{k+1}
    \end{align}
then I can perform a logical Jordan-Wigner transformation to obtain qubit operators as follows
\begin{align}
    \Gamma_1 = \gamma_1\bar{\gamma_1} \quad \bar{\Gamma}_1 = \gamma_1\\
    \Gamma_2 = \gamma_2 \gamma_3 \quad \bar{\Gamma}_2 = \gamma_2\bar{\gamma}_3\\
    \vdots\\
    \Gamma_k = \gamma_2 \prod_{i=3} \gamma_i \bar{\gamma}_i \gamma_{k+1} \quad \Gamma_k = \gamma_2 \prod_{i=3} \gamma_i \bar{\gamma}_i \bar{\gamma}_{k+1}
\end{align}
We can convert Majoranas into qubits, using physical Jordan Wigner transformation
\begin{align}
    \Gamma_2 = Y_2X_3  \quad \bar{\Gamma}_2 = Y_2Y_3\\
    \vdots\\
    \Gamma_k = Y_2 X_{k+1} \quad \Gamma_k = Y_2 X_{k+1}
\end{align}
which represents the decoded qubits in the even parity sector.

\section{Encoding circuit with reference modes}\label{section:encodingcircuitforref}

In the main text we emphasized that for logical operators with odd weight, we need a reference state to violate parity superselection. This requires the preparation of a global parity invariant state. In other words, we have a parity constraint on the system and reference. A general parity invariant state on $n$ parties can be written as:
\begin{align}\label{parityinvstate}
    \ket{\psi}_{\text{inv}} = (\frac{1+\Ptot}{\sqrt{2^{n-1}}})\sum_{\{i_1,i_2,\ldots i_n=0,1\}}\ket{i_1,i_2,\ldots i_n}
\end{align}
where 
\begin{align}
    \ket{i_1,i_2,\ldots i_n}=\gamma_1^{i_1,}\gamma_2^{i_2}\ldots \gamma_n^{i_n}\ket{\text{vac}}
\end{align}
where one of the parties, say the first one, is the reference, and the rest are the system. 

For two modes, labeled $R$ and $A$, respectively denoting reference and Alice, the parity invariant state is
\begin{align}
    \ket{\psi}_{\text{inv}} = \frac{1}{\sqrt{2}}(\ket{00}_{RA}+\ket{11}_{RA})
\end{align}
To prepare the general state in \ref{parityinvstate}, one can use the circuit 
\begin{multline}
    C\ket{00\ldots 0}=\prod_{i=0}^{n-2} \braidtwo(1, n - i)^{\dagger}\ket{00\ldots 0} \\
    = \braidtwo(1, n)^{\dagger} \braidtwo(1, n - 1)^{\dagger} \cdots \braidtwo(1, 2)^{\dagger}\ket{00\ldots 0}
\end{multline}
For $n=2$, where first mode now refers to reference and second to Alice
This can be prepared using $\braidtwo$ gates, as follows:
\begin{align}
    \ket{\psi}_{\text{inv}}=\braidtwo(R,A)^{\dagger} (\ket{00})\\
    =\frac{\ket{00}_{RA}+\ket{11}_{RC}}{\sqrt{2}}
\end{align}

For $n=3$, as is given in the main text, 
\begin{align}
    \ket{\psi}_{\text{inv}}=\frac{1}{2}(\ket{000}_{RAC} + \ket{011}_{RAC} + \ket{101}_{RAC}+\ket{110}_{RAC})
\end{align}
can be found using 
\begin{align}
   \ket{\psi}_{\text{inv}} =\braidtwo^{\dagger}(R,C)\braidtwo^{\dagger}(R,A)\ket{000}
\end{align}

\section{Jordan Wigner as a linear map}\label{section:JordanWignermap}
Quantum channels correspond to linear maps that are CPTP. One question that someone may have would be whether a logical transformation between logical operators that behave as qubits and logical operators that behave as a fermions is a physical map at all? We show that it is not a physical map since it is not CP. 

An (inverse) Jordan Wigner transformation maps operators with qubit commutation relations to operators with fermionic anticommutation relations.  In particular
\begin{align}
    \Lambda(X_i) = \prod_{j<i}Z_j X_i\\
    \Lambda(Y_i) = \prod_{j<i} Z_j Y_i
\end{align}
where $\Lambda$ corresponds to the inverse Jordan Wigner map. It is clear that the input Pauli strings commute for different $i$, while output strings mutually anticommute, so a unitary transformation cannot actually do this. Next, to check if this map is CPTP, we can use the Pauli transfer matrix and Choi Jamiolkowski formalism. The Pauli transfer matrix form is:
\begin{align}
    (R_{\Lambda})_{i,j} = \frac{1}{d}\sum \Tr[P_i\Lambda(P_j)]
\end{align}
For the sake of simplicity, we can evaluate for a two qubit inverse Jordan Wigner transform. The matrix has the following expression
\begin{align}
    (R_{\Lambda})_{i,i} = 1 \quad i\notin \{4,5,7,8,10,11,13,14\}\\
    (R_{\Lambda})_{i,i} = 0 \quad i \in \{4,5,7,8,10,11,13,14\}\\
    (R_{\Lambda})_{4,7} = 
    (R_{\Lambda})_{7,4} =  
    (R_{\Lambda})_{5,8} = 
    (R_{\Lambda})_{8,5} = 1 \\
    (R_{\Lambda})_{10,13} = 
    (R_{\Lambda})_{13,10} = -
    (R_{\Lambda})_{14,11} = 
    (R_{\Lambda})_{11,14} = -i 
\end{align}
One can then compute Choi matrix corresponding to this
\begin{align}
    \rho_{ij} = \frac{1}{d^2}\sum_{i,j} (R_{\Lambda})_{i,j} P_j^T \otimes P_i
\end{align}
This can be evaluated to find that the eigenvalues of this matrix are not all positive, hence it is not CP, but it is still TP since the first row of $R$ corresponds to (100...000) which is the necessary condition.
\section{Details of the syndrome vector in Steane error correction}\label{section: syndromedetail}
As shown in the main text, we have four blocks instead of two blocks as in standard Steane correction. This also gives us extra information of the error source. For the sake of simplicity, we explain for $\gamma$ errors, and similar idea follows for $\bar{\gamma}$ errors.

Suppose I start in the initial states as specified in the main text, and before we implement the circuit, we were to measure the syndromes of each of the blocks in the occupation basis. For instance, if I had an error $\gamma_i$ for $i=\{1,\ldots n\}$ on the data block, I will get a syndrome bit $1$ for the data block on some $j$ bit $j=\{1,2,\ldots n-k\}$. In practice, I do not actually measure these syndromes, but for the sake of this explanation, we deem them as ``input syndrome vectors". In the table \ref{tab:inputs}, $1$ denotes a that one of the syndrome bit is flipped, while $0$ denotes that no syndrome bit is  flipped. We assume that a syndrome bit flip here means that there was an error in that block.  

\begin{table}[]
    \centering
    \vspace{10pt}
    \begin{tabular}{|c|c|c|c|}
    \hline
     $s_d$   & $s_2$ & $s_3$ & $s_4$  \\
     \hline
     1 & 0 & 0 & 0\\
     0 & 1 & 0 & 0\\
     0 & 0 & 1 & 0\\
     0 & 0 & 0 & 1\\
     \hline
    \end{tabular}
    \caption{Input table}
    \label{tab:inputs}
\end{table}
Table \ref{tab:outputs} shows the syndrome outputs on each block. Note that we do not actually measure $s_d$ since the data block holds the information, rather the goal is to use information from $s_2,s_3,s_4$ to find the syndrome on data block. 
\begin{table}[]
    \centering
    \vspace{10pt}
    \begin{tabular}{|c|c|c|c|}
    \hline
     $s_d$   & $s_2$ & $s_3$ & $s_4$  \\
     \hline
     1 & 0 & 1 & 1\\
     0 & 1 & 1 & 1\\
     1 & 1 & 0 & 1\\
     1 & 1 & 1 & 1\\
     \hline
    \end{tabular}
    \caption{Output table}
    \label{tab:outputs}
\end{table}
From table \ref{tab:outputs}, it is now obvious that if all three ancilla syndrome vectors are flipped, then there is no error on data, but if two of them are flipped, then there is an error on data block. This is captured by the Boolean function given in Eq.\ref{eq:boolean}. Moreover, since all four output strings are different even if we ignore the data syndrome, this means that we can distinguish between all four sources of syndromes, which is different from the standard qubit approach.

\bibliography{ref.bib}

\end{document}